\theoremstyle{plain}
\newtheorem{theorem}{Theorem}
\newtheorem{corollary}[theorem]{Corollary}
\newtheorem{lemma}[theorem]{Lemma}
\newtheorem{proposition}[theorem]{Proposition}
\theoremstyle{definition}
\newtheorem{definition}[theorem]{Definition}
\newtheorem{example}[theorem]{Example}
\newtheorem{construction}{Construction}
\newcommand{\F}{\mathbb{F}}
\DeclareMathAlphabet{\mathbfsl}{OT1}{ppl}{b}{it} 
\newcommand{\vr}{\mathbfsl{r}}
\newcommand{\vu}{\mathbfsl{u}}
\newcommand{\vv}{\mathbfsl{v}}
\newcommand{\vx}{\mathbfsl{x}}
\newcommand{\vy}{\mathbfsl{y}}
\newcommand{\vc}{\mathbfsl{c}}
\newcommand{\vA}{\mathbfsl{A}}
\newcommand{\vG}{\mathbfsl{G}}
\newcommand{\vI}{\mathbfsl{I}}
\newcommand{\mC}{\mathbb{C}}
\newcommand{\mD}{\mathbb{D}}
\newcommand{\mE}{\mathbb{E}}
\newcommand{\cB}{\mathcal{B}}
\newcommand{\cC}{\mathcal{C}}
\newcommand{\cD}{\mathcal{D}}
\newcommand{\cE}{\mathcal{E}}
\newcommand{\cP}{\mathcal{P}}
\newcommand{\xxx}{\mathbfsl{x}}
\newcommand{\yyy}{\mathbfsl{y}}
\newcommand{\bta}{{\pmb{\beta}}}
\newcommand{\bsg}{{\pmb{\sigma}}}
\newcommand{\bt}{{\pmb{\tau}}}
\newcommand{\supp}{{\rm supp}}
\newcommand{\wt}{{\rm wt}}
\newcommand{\zero}{{\mathbf 0}}
\newcommand{\floor}[1]{{\left\lfloor #1\right\rfloor}}
\title{Low-Power Cooling Codes\\ with Efficient Encoding and Decoding
\vspace{0.2cm}}
\author{
\IEEEauthorblockN{
Yeow Meng Chee\IEEEauthorrefmark{1},
Tuvi Etzion\IEEEauthorrefmark{2},
Han Mao Kiah\IEEEauthorrefmark{1},
Alexander Vardy \IEEEauthorrefmark{1}$^,$\IEEEauthorrefmark{3},
Hengjia Wei\IEEEauthorrefmark{1}}\\[1mm] \vspace{0.3cm}
\IEEEauthorblockA{\IEEEauthorrefmark{1}\footnotesize School of Physical and Mathematical Sciences, Nanyang Technological University, Singapore} \\ \vspace{0.2cm}
\IEEEauthorblockA{\IEEEauthorrefmark{2}\footnotesize Department of Computer Science, Technion, Israel} \\ \vspace{0.2cm}
\IEEEauthorblockA{\IEEEauthorrefmark{3}\footnotesize Dept. of Electrical and Computer Engineering and Dept. of Computer Science and Engineering, University of California San Diego, USA}
\thanks{Earlier results of this paper were presented at the 2018 Proceedings of the IEEE ISIT \cite{CEKVW18}.}
\vspace{-10mm}
}
\begin{document}
\date{}
\maketitle

\hspace*{-1pt}\begin{abstract}\boldmath
A class of low-power cooling (LPC) codes,
to control simultaneously both the peak temperature and the average power consumption of interconnects,
was introduced recently.
An $(n,t,w)$-LPC code is a coding scheme over $n$ wires that\newline
(A) avoids state transitions on the $t$ hottest wires (cooling), and\newline
(B) limits the number of transitions to $w$ in each transmission (low-power).

A few constructions for large LPC codes that have efficient encoding and decoding schemes, are given.
In particular, when $w$ is fixed, we construct LPC codes of size $(n/w)^{w-1}$ and
show that these LPC codes can be modified to correct errors efficiently.
We further present a construction for large LPC codes based on a mapping
from cooling codes to LPC codes. The efficiency of the encoding/decoding for the constructed LPC codes
depends on the efficiency of the decoding/encoding for the related cooling codes and the ones
for the mapping.
\end{abstract}

\section{Introduction}
Power and heat dissipation
have emerged as first-order design constraints for chips,
whether targeted for battery-powered devices or for high-end systems.
High temperatures have dramatic negative effects on interconnect performance.
Power-aware design alone is insufficient to
address the thermal challenges, since it does not directly target the
spatial and temporal behavior of the operating environment. For this
reason, thermally-aware approaches have emerged as one~of~the most
important domains of research in chip design today.
Numerous techniques have been proposed to reduce the overall power consumption of on-chip buses
(see \cite{Cheeetal.2017} which uses coding techniques and the references therein using non-coding techniques).
However, all the non-coding techniques do not directly address peak temperature minimization.

Recently, Chee et al. \cite{Cheeetal.2017} introduced several efficient coding schemes to directly control the peak temperature and the average power consumption.
Among others, {\it low-power cooling (LPC) codes} are of particular interest as
they control both the peak temperature and the average power consumption simultaneously.
Specifically, an $(n,t,w)$-LPC code is a coding scheme for communication over a bus consisting of $n$ wires,
if the scheme has the following two features:
\begin{enumerate}
\item[(A)]  every transmission does not cause state transitions on the $t$ hottest wires;
\item[(B)] the number of state transitions on all the wires is at most $w$ in every transmission.
\end{enumerate}

LPC codes have both features, while cooling codes control only the peak temperature.
\begin{definition}
For $n$ and $t$, an {\it $(n,t)$-cooling code} $\mC$ of size $M$ is defined as a collection
$\{\cC_1, \cC_2, \ldots, \cC_M\}$, where $\cC_1, \cC_2, \ldots, \cC_M$ are disjoint subsets
of $\{0,1\}^n$ satisfying the following property:
for any set ${S} \subseteq [n]$ of size $|S| = t$ and for $i\in[M]$,
there exists a vector $\vu \in \cC_i$ such that $\supp(\vu)\cap S = \varnothing$.
We refer to  $\cC_1,\cC_2, \ldots, \cC_M$ as {\it codesets} and the vectors  in them as {\it codewords}.
\end{definition}

Using {\em partial spreads}, Chee et al.~\cite{Cheeetal.2017} constructed LPC codes with efficient encoding and decoding schemes.
When $t\le 0.687n$ and $w\ge (n-t)/2$, these codes achieve optimal asymptotic rates.
However, when $w$ is small, i.e. low-power coding is used, the code rates are small and
Chee et al. proposed another construction based on {\em decomposition of the complete hypergraph} into perfect matchings.
While the construction results in LPC codes of large size, usually efficient encoding and decoding algorithms are not known.

In this work, we focus on this regime ($w$ small) and construct LPC codes with efficient encoding and decoding schemes.
Specifically, our contributions are as follows.

\begin{enumerate}[(I)]
\item We propose a method that takes a linear erasure code as input and constructs an LPC code.
Using this method, we then construct a family of LPC codes of size $(n/w)^{w-1}$
which attains the asymptotic upper bound $O(n^{w-1})$ when $w$ is fixed.
We also use this method to construct a class of LPC codes of size $(n/w)^{w-e-1}$ which is able to correct $e$ transmission errors.

\item {We propose efficient encoding/decoding schemes for the LPC codes of the given construction.
In particular, for the above family of LPC codes, we demonstrate encoding with
$O(n)$ multiplications over $\F_q$  and decoding with $O(w^3)$ multiplications over $\F_q$, where $q={n}/{w}$.}
Furthermore, the related class of LPC codes is able to correct $e$ errors with complexity $O(n^3)$.

\item A definition for a new family of low-power cooling codes, called \emph{constant-power cooling}
(CPC in short) codes, which have the same weight for all the
codewords. All our previous constructions can be applied to obtain such codes.
\item A recursive construction for a class of $(nq,tq,w)$-CPC codes (and also $(nq,tq,w)$-LPC codes) from $(n,t,w)$-CPC codes
(and a special type of $(n,t,w)$-LPC codes).

\item A construction for a class of $(n,t,w)$-LPC codes based on a mapping from $(m,t)$-cooling
codes. This mapping send all the binary words of length $m$ into a Hamming ball of radius $w$ in $\F_q^n$
such that each coordinate of the obtained words in the Hamming ball is dominated by a coordinate of
the binary words of length $m$. This property guarantees that the cooling property of the $(m,t)$-cooling code
is preserved in the low-power cooling code.
\end{enumerate}

Our main target in this paper are cooling codes, but these codes, or more precisely
their definition by codesets, might have other applications too. One such application
is in the design of WOM (Once Write Memory) codes which are very important in
coding for flash memories (see~\cite{CKVY17} and references therein). This application of codesets
into construction of WOM codes was written in detail in~\cite{CKVY17} and
is described in short as follows. In a WOM code one is trying to write binary information words
of length $k$ into a memory of length $n$, where the information is written only in
positions where there are \emph{zeroes}. The goal is to write as many rounds as possible
until there is no way to distinguish between some of the written words. Each information
word will be identified by a codeset, the codeword taken from the appropriate codeset
should have \emph{ones} on all positions where the memory has \emph{ones} and hence
the \emph{ones} in complement of the codeword should have empty intersection with the
\emph{ones} of the memory. As was mentioned, this application to WOM codes was considered in~\cite{CKVY17}.
We believe that other applications will arise in the future.

The rest of this paper is organized as follows. In Section~\ref{sec:upper} we present some
necessary definitions for our exposition, some of the known results, and new upper
bounds on the sizes of low-power cooling codes and constant-power cooling codes.
Finally, the known constructions are presented and a new one is suggested.
Section~\ref{sec:efficient} suggests a construction for CPC codes based on non-binary linear
codes in general and on MDS codes in particular. For these codes efficient encoding
and decoding algorithms are derived. We continue in Section~\ref{sec:ECC} and add error-correction capabilities
for such codes and provide efficient algorithms also in this case. The construction that was used
in Section~\ref{sec:ECC} is modified in Section~\ref{sec:recursion} to provide a recursive construction for
$(nq,tq,w)$-CPC codes (and related $(nq,tq,w)$-LPC codes) from $(n,t,w)$-CPC codes (and some special $(n,t,w)$-CPC codes).
While in Section~\ref{sec:efficient} the constructions are for $t\le n/w -1$, in Section~\ref{sec:recursion}
the construction is effective for larger $t$. In Section~\ref{sec:map}
a method to transfer an $(m,t)$-cooling code to an $(n,t,w)$-LPC code is given.
This method is based on a special injection from the set of all binary words of length $m$
into binary words of length $n$ and weight at most $w$. A product construction using
this method implies codes with efficient encoding and decoding algorithms.
We further analyse and compare between this construction and constructions in previous works.

\section{Upper Bounds and Known Results}
\label{sec:upper}

Given a positive integer $n$, the set $\{1, 2, \ldots , n\}$ is abbreviated as $[n]$.
The {\it Hamming weight} of a vector $\vx \in \F_q^n$, denoted $\wt(\vx)$, is the number of nonzero positions in $\vx$,
while the {\it support} of $\vx$ is defined as $\supp(\vx)\triangleq\{ i\in[n]: x_i\not=0\}$.

A $q$-ary {\it code $\cC$ of length $n$} is a subset of $\F_q^n$.
If $\cC$ is a subspace of $\F_q^n$, it is called a {\it linear code}.
An $[n,k,d]_q$ code is a linear code with dimension $k$ and minimum Hamming distance $d$.

\begin{definition}
For $n$, $t$ and $w$ with $t+w \leq n$, an {\it $(n,t,w)$-low-power cooling (LPC) code} $\mC$ of size $M$
is defined as a collection of codesets $\{\cC_1, \cC_2, \ldots, \cC_M\}$, where $\cC_1, \cC_2, \ldots, \cC_M$
are disjoint subsets of $\{\vu\in \{0,1\}^n: \wt(\vu) \leq w\}$ satisfying the following property:
for any set ${S} \subseteq [n]$ of size $|S| = t$ and for $i\in[M]$,
there exists a vector $\vu \in \cC_i$ such that $\supp(\vu)\cap S = \varnothing$.
\end{definition}

In this paper, we focus on a class of  $(n,t,w)$-LPC codes where
every transmission results in exactly $w$ state transitions.
We call such codes {\it $(n,t,w)$-constant-power cooling (CPC) codes}.
In particular, let $J(n,w)\triangleq \{\vu\in\{0,1\}^n: \wt(\vu) = w\}$.
Then an $(n,t,w)$-CPC code is an $(n,t,w)$-LPC code such that $\cC_i\subseteq J(n,w)$ for each $i\in[M]$.

\subsection{Set Systems}

For a finite set $X$ of size $n$, $2^X$ denotes the collection of all subsets of $X$, i.e.,
$2^X\triangleq\{A: A\subseteq X\}$.  A {\it set system} of order $n$ is a pair $(X, \cB)$,
where $X$ is a finite set of $n$ {\it points}, $\cB \subseteq 2^X$, and the elements of $\cB$ are called {\it blocks}.
Two set systems $(X, \cB_1)$ and $(X, \cB_2)$
with the same point set are called {\it disjoint} if $\cB_1 \cap \cB_2 =\varnothing$, i.e. they don't have
any block in common.

A {\it partial parallel class} of a set system $(X,\cB)$ is a collection of pairwise disjoint blocks.
If a partial parallel class partitions the point set $X$, it is called {\it parallel class}.
A set system $(X, \cB)$ is called {\it resolvable} if the block set $\cB$ can be partitioned into parallel classes.

There is a canonical one-to-one correspondence between the set of all codes of length $n$ and
the set of all set systems of order $n$: the coordinates of vectors in $\{0,1\}^n$ correspond
to the points in $[n]$, and each vector $\vu\in \{0,1\}^n$ corresponds to the block defined by $\supp(\vu)$.
Thus we may speak of the set system of a code or the code of a set system.
By abuse of notation we sometimes do not distinguish between the two different notations and this
can be readily observed in the text.

\subsection{Upper Bounds}

Given a $t$-subset $S$ and a vector $\vu \in \{0,1\}^n$, we shall say that $\vu$ {\it avoids} $S$
if $\supp(\vu)\cap S = \varnothing$. The following bounds on LPC codes and CPC codes are easily derived.

\begin{theorem}
\label{thm:upper_bounds}
Let $\mC$ be an $(n,t,w)$-LPC code of size $M$, then
\begin{align*}
M  \leq  \sum_{i=0}^w {n-t\choose i}.
\end{align*}
Furthermore, if $\mC$ is an $(n,t,w)$-CPC code, then
\begin{align*}
M  \leq  {n-t\choose w}.
\end{align*}
\end{theorem}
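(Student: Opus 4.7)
The plan is to prove both bounds by a straightforward double-counting / injection argument based on a single carefully chosen forbidden set $S$. The key observation is that the LPC/CPC defining property forces each codeset to supply at least one representative avoiding any fixed $S$, and distinctness of codesets then makes these representatives pairwise distinct.

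First I would fix an arbitrary $t$-subset $S \subseteq [n]$. For each $i \in [M]$, the LPC property guarantees the existence of a vector $\vu_i \in \cC_i$ with $\supp(\vu_i) \cap S = \varnothing$, i.e.\ $\supp(\vu_i) \subseteq [n]\setminus S$. Since $\cC_i \subseteq \{\vu \in \{0,1\}^n : \wt(\vu) \leq w\}$, we additionally have $\wt(\vu_i) \leq w$. Pick one such $\vu_i$ for each $i$; because the codesets $\cC_1,\ldots,\cC_M$ are pairwise disjoint, the vectors $\vu_1,\ldots,\vu_M$ are pairwise distinct. Hence the map $i \mapsto \vu_i$ is an injection from $[M]$ into
\[
\cA(S) \;\triangleq\; \bigl\{\vu \in \{0,1\}^n : \supp(\vu) \subseteq [n]\setminus S,\ \wt(\vu) \leq w\bigr\}.
\]
Counting by weight gives $|\cA(S)| = \sum_{i=0}^{w} \binom{n-t}{i}$, which yields the first bound.

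For the CPC case, the argument is identical except that every codeword has weight exactly $w$, so $\vu_i$ may be chosen with $\wt(\vu_i) = w$ and $\supp(\vu_i) \subseteq [n]\setminus S$. The target set then becomes $\{\vu \in J(n,w) : \supp(\vu) \subseteq [n]\setminus S\}$, which is in bijection with the $w$-subsets of $[n]\setminus S$ and therefore has cardinality $\binom{n-t}{w}$. The same injection argument yields $M \leq \binom{n-t}{w}$.

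There is essentially no technical obstacle here: the proof uses only the defining property for a single choice of $S$ together with disjointness of codesets. The only mild subtlety worth flagging is to emphasize that the bound holds for \emph{every} choice of $S$ and we are merely using one such $S$; one cannot strengthen the bound further by varying $S$ with this argument alone, since different $S$ may be witnessed by codewords of different supports within the same codeset.
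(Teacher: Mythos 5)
Your proof is correct and follows essentially the same route as the paper: fix a single $t$-subset $S$, use the defining property to pick one codeword avoiding $S$ from each codeset, use disjointness of the codesets to get distinct representatives, and count the words of weight at most $w$ (respectively exactly $w$) supported in $[n]\setminus S$. Your write-up merely makes the injection and the role of disjointness explicit, which the paper leaves implicit.
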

\begin{proof}
For any given $t$-subset $S$ of $[n]$, each codeset should have at least one codeword
which avoids $S$. The number of words with weight $i$ which avoid $S$ is ${n-t \choose i}$ and hence
there are no more than ${n-t\choose w}$ codesets in an $(n,t,w)$-CPC code and no more than
$\sum_{i=0}^w {n-t\choose i}$ codesets in an $(n,t,w)$-LPC code.
\end{proof}

Theorem~\ref{thm:upper_bounds} implies that both CPC codes and LPC codes share the same asymptotic upper bound $O(n^w)$
on the number of codewords. The upper bound of Theorem~\ref{thm:upper_bounds} can
be improved for some parameters. For this purpose, we need to define and to introduce some results on Tur\'an systems.

Let $n\geq k\geq r$, and let $X$ be a finite set with $n$ distinct elements. The set $\binom{X}{r}$ is the collection of all $r$-subsets of $X$.
A {\it Tur\'an $(n,k,r)$-system} is a set system $(X, {\cal B})$, where $|X|=n$ and $\cB \subseteq {X\choose r}$
is the set of blocks such that each $k$-subset of $X$ contains at least one of the blocks.
The {\it Tur\'an number} $T(n,k,r)$ is the minimum number of blocks in such a system.
This number is determined only for $r=2$ and some sporadic examples~(see~\cite{Sid95,Kee11} and references therein).
De Caen~\cite{deCaen.1983} proved the following general lower bound on $T(n,k,r)$.

\begin{equation}\label{bndforTurannumber}
T(n,k,r)\geq \frac{n-k+1}{n-r+1} \cdot \frac{\binom{n}{r}}{\binom{k-1}{r-1}}.
\end{equation}

The following proposition is an immediate result from the definition of Tur\'an systems.

\begin{proposition}
\label{CPCnTuransys}
A family of codesets $\{\cC_1, \cC_2,\ldots,\cC_M\}$ is an $(n,t,w)$-CPC code
if and only if the set system of each $\cC_i$ is a Tur\'an $(n,n-t,w)$-system and these $M$ set systems are pairwise disjoint.
\end{proposition}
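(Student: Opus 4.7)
The plan is to prove both directions in a single stroke by unwinding the two conditions directly through the canonical correspondence between binary vectors of weight $w$ and $w$-element subsets of $[n]$, as described earlier in the excerpt. Under this correspondence, a codeset $\cC_i\subseteq J(n,w)$ translates to a collection $\cB_i\subseteq\binom{[n]}{w}$ of blocks, and the disjointness of $\cC_1,\ldots,\cC_M$ as subsets of $\{0,1\}^n$ is tautologically the disjointness of $\cB_1,\ldots,\cB_M$ as collections of $w$-subsets. So the only real content is to check that the cooling property and the Tur\'an covering property say the same thing.

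First I would rewrite the cooling property for CPC codes in the language of blocks. For a fixed $i\in[M]$, the property requires that for every $t$-subset $S\subseteq[n]$ there is a codeword $\vu\in\cC_i$ with $\supp(\vu)\cap S=\varnothing$, which under the correspondence is a block $B\in\cB_i$ satisfying $B\subseteq[n]\setminus S$. Since $\cC_i\subseteq J(n,w)$, each such $B$ is a $w$-subset of $[n]$.

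Next I would apply the obvious bijection $S\leftrightarrow K=[n]\setminus S$ between $t$-subsets and $(n-t)$-subsets of $[n]$. Under this bijection, the statement ``for every $t$-subset $S$ there exists $B\in\cB_i$ with $B\subseteq[n]\setminus S$'' becomes ``for every $(n-t)$-subset $K$ there exists $B\in\cB_i$ with $B\subseteq K$,'' which is precisely the definition of a Tur\'an $(n,n-t,w)$-system with $\cB_i\subseteq\binom{[n]}{w}$. Both implications follow simultaneously from this equivalence, and the disjointness conditions match on the two sides by construction.

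There is essentially no obstacle here; the proposition is a definition chase via the canonical dictionary between codes and set systems. The only thing to be careful about is that the CPC assumption (weight exactly $w$) is what guarantees $\cB_i\subseteq\binom{[n]}{w}$, so that the resulting set systems are uniform of rank $w$ as required by the Tur\'an definition. I would therefore state the proof as two short observations followed by the complementation bijection, with a brief remark that the constant-weight hypothesis is exactly what places the blocks in $\binom{[n]}{w}$.
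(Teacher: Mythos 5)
Your proof is correct and matches the paper's intent: the paper offers no explicit proof, calling the proposition an immediate consequence of the definition of Tur\'an systems, and your argument is exactly that definition chase. Translating codewords to blocks, matching disjointness on both sides, and passing from ``avoid the $t$-subset $S$'' to ``be contained in the $(n-t)$-subset $[n]\setminus S$'' via complementation is precisely the intended reasoning, with the constant-weight hypothesis correctly identified as what makes the blocks lie in $\binom{[n]}{w}$.
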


Combining the bound in~(\ref{bndforTurannumber}) and Proposition~\ref{CPCnTuransys},
we have the following upper bound on the size of CPC codes.

\begin{theorem}\label{thm:upperbound}
If $\mC$ is an $(n,t,w)$-CPC code of size $M$, then
\begin{equation*}
M\leq \frac{n-w+1}{t+1}{n-t-1\choose w-1}.
\end{equation*}
\end{theorem}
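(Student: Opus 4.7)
The plan is to combine Proposition~\ref{CPCnTuransys} with de~Caen's lower bound~(\ref{bndforTurannumber}) in essentially one step. First I would invoke the proposition to regard each codeset $\cC_i$ as the block set of a Tur\'an $(n,n-t,w)$-system, so that $|\cC_i|\geq T(n,n-t,w)$ for every $i\in[M]$. Since the $M$ codesets are pairwise disjoint and each codeword is a weight-$w$ binary vector of length $n$, all blocks sit inside $\binom{[n]}{w}$, which gives the global counting inequality
\begin{equation*}
M\cdot T(n,n-t,w) \;\leq\; \sum_{i=1}^{M}|\cC_i| \;\leq\; \binom{n}{w}.
\end{equation*}

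Next I would specialize the de Caen bound by taking $k=n-t$ and $r=w$ in~(\ref{bndforTurannumber}). The factor $n-k+1$ becomes $t+1$, the factor $n-r+1$ becomes $n-w+1$, and $\binom{k-1}{r-1}$ becomes $\binom{n-t-1}{w-1}$, so the bound reads
\begin{equation*}
T(n,n-t,w)\;\geq\;\frac{t+1}{n-w+1}\cdot\frac{\binom{n}{w}}{\binom{n-t-1}{w-1}}.
\end{equation*}
Substituting this into the previous display, the $\binom{n}{w}$ terms cancel and rearranging yields $M\leq \frac{n-w+1}{t+1}\binom{n-t-1}{w-1}$, which is exactly the claimed inequality.

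There is really no substantive obstacle here, as both ingredients are already on the page: the theorem is essentially a corollary of Proposition~\ref{CPCnTuransys} together with~(\ref{bndforTurannumber}). The only things worth double-checking are the parameter identifications in de Caen's formula and the cancellation of $\binom{n}{w}$; both are routine.
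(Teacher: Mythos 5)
Your proposal is correct and follows essentially the same route as the paper: specialize de Caen's bound~(\ref{bndforTurannumber}) with $k=n-t$, $r=w$, use Proposition~\ref{CPCnTuransys} together with the disjointness of the codesets inside $J(n,w)$ to get $M\cdot T(n,n-t,w)\leq \binom{n}{w}$, and combine. The parameter identifications and the cancellation of $\binom{n}{w}$ are exactly as in the paper's proof.
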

\begin{proof}
By (\ref{bndforTurannumber}) we have that
\begin{equation}
\label{eq:Tr1}
T(n,n-t,w)\geq \frac{t+1}{n-w+1} \cdot \frac{\binom{n}{w}}{\binom{n-t-1}{w-1}}.
\end{equation}
By Proposition~\ref{CPCnTuransys} we have that
\begin{equation}
\label{eq:Tr2}
M \leq \frac{\binom{n}{w}}{T(n,n-t,w)}.
\end{equation}
Combining (\ref{eq:Tr1}) and (\ref{eq:Tr2}) yield that
$$
M\leq \frac{n-w+1}{t+1}{n-t-1\choose w-1}.
$$
\end{proof}
\begin{corollary}
\label{cor:Turan}
If $\mC$ is an $(n,t,w)$-LPC code of size $M$, then
$$
M \leq \sum_{i=0}^{w-1} {n\choose i}+\frac{n-w+1}{t+1}{n-t-1\choose w-1}~.
$$
\end{corollary}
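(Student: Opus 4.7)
The plan is to reduce the LPC bound to the CPC bound from Theorem~\ref{thm:upperbound} by separating codesets according to whether they contain a ``light'' codeword (weight $< w$) or consist purely of ``heavy'' codewords (weight $= w$). This split matches the two summands in the claimed inequality exactly, which strongly suggests this is the right decomposition.

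Concretely, I would partition the index set $[M]$ into $\mathcal{I}_1 \cup \mathcal{I}_2$, where $\mathcal{I}_1$ collects the indices $i$ such that $\cC_i$ contains some vector of Hamming weight strictly less than $w$, and $\mathcal{I}_2$ collects the remaining indices, for which $\cC_i \subseteq J(n,w)$. For each $i \in \mathcal{I}_1$, pick one witness codeword $\vu_i \in \cC_i$ with $\wt(\vu_i) < w$. Because the codesets $\cC_1,\ldots,\cC_M$ are pairwise disjoint, the chosen $\vu_i$ are distinct, so
\[
|\mathcal{I}_1| \;\leq\; |\{\vu \in \{0,1\}^n : \wt(\vu) < w\}| \;=\; \sum_{i=0}^{w-1}\binom{n}{i}.
\]

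For the second piece, I would observe that the subfamily $\{\cC_i : i \in \mathcal{I}_2\}$ inherits the LPC defining property from $\mC$: for every $t$-subset $S \subseteq [n]$ and every $i \in \mathcal{I}_2$ there still exists a $\vu \in \cC_i$ avoiding $S$. Since every such $\cC_i$ sits inside $J(n,w)$, this subfamily is by definition an $(n,t,w)$-CPC code of size $|\mathcal{I}_2|$. Theorem~\ref{thm:upperbound} then gives
\[
|\mathcal{I}_2| \;\leq\; \frac{n-w+1}{t+1}\binom{n-t-1}{w-1}.
\]
Adding the two bounds yields the corollary.

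There is no real obstacle here; the only subtlety is making sure the splitting is clean. In particular, a codeset in $\mathcal{I}_1$ may still contain some weight-$w$ vectors, but this is harmless because I only need a single light witness per such codeset, while codesets in $\mathcal{I}_2$ must, by construction, be entirely contained in $J(n,w)$ so that Theorem~\ref{thm:upperbound} applies verbatim.
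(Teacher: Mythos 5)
Your proof is correct and takes essentially the same approach as the paper: separate off the codesets that contain a codeword of weight less than $w$ (at most $\sum_{i=0}^{w-1}\binom{n}{i}$ of them, since disjointness forces distinct light witnesses) and bound the remaining pure weight-$w$ codesets, which form an $(n,t,w)$-CPC code, by Theorem~\ref{thm:upperbound}. Your decomposition of an arbitrary LPC code is simply a more carefully stated version of the paper's terse constructive phrasing (``add at most $\sum_{i=0}^{w-1}\binom{n}{i}$ codesets to a CPC code''), including the correct observation that a codeset with a light witness may still contain weight-$w$ codewords.
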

\begin{proof}
If we consider an $(n,t,w)$-CPC code $\mC$, then to form an $(n,t,w)$-LPC code
we can add to $\mC$ at most $\sum_{i=0}^{w-1} {n\choose i}$
codesets, each one contains exactly one codeword of weight less than $w$.
\end{proof}

When $t=\Theta(n)$, we have that $(n-w+1)/(t+1)=O(1)$, and so
the upper bound for $(n,t,w)$-CPC codes is improved from $O(n^w)$ implied by Theorem~\ref{thm:upper_bounds}
to $O(n^{w-1})$ implied by Theorem~\ref{thm:upperbound}.

For an $(n,t,w)$-LPC code, we have by Corollary~\ref{cor:Turan} that
the size of such a code is at most
$$\sum_{i=0}^{w-1} {n\choose i}+\frac{n-w+1}{t+1}{n-t-1\choose w-1},$$
which is also $O(n^{w-1})$
when $t$ and $n$ are of the same order of magnitude.

\subsection{Some Known Constructions}

Chee et. al~\cite{Cheeetal.2017} provided the following construction of LPC/CPC codes.

\begin{proposition}[Decomposition of Complete Hypergraphs]
\label{prop:decomp}
If $n=(t+1)w$ then there exists an $(n,t,w)$-CPC code of size $\binom{n-1}{w-1}$.
\end{proposition}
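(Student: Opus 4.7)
The plan is to view the required CPC code as a resolution of the complete $w$-uniform hypergraph on $n$ vertices, and then invoke Baranyai's theorem. Recall that a parallel class of the complete $w$-uniform hypergraph $K_n^{(w)}$ is a partition of $[n]$ into blocks of size $w$; since $n=(t+1)w$, each parallel class consists of exactly $t+1$ blocks. Baranyai's theorem asserts that when $w\mid n$, the edge set $\binom{[n]}{w}$ of $K_n^{(w)}$ can be partitioned into parallel classes, and the count of such classes is $\binom{n}{w}/\bigl(n/w\bigr)=\binom{n-1}{w-1}$.

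The construction is then immediate. First, I would apply Baranyai's theorem to obtain parallel classes $P_1,P_2,\ldots,P_M$ of $K_n^{(w)}$ with $M=\binom{n-1}{w-1}$. For each $P_i$, I would define the codeset $\cC_i$ as the set of characteristic vectors of the $t+1$ blocks in $P_i$. By construction every codeword has weight exactly $w$, so each $\cC_i$ sits inside $J(n,w)$, giving the constant-power property. The codesets are pairwise disjoint because the $P_i$'s together partition $\binom{[n]}{w}$, so no two codesets share a block (equivalently, a codeword).

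It remains to verify the cooling property. Fix any $t$-subset $S\subseteq[n]$ and any index $i\in[M]$. The parallel class $P_i$ consists of $t+1$ pairwise disjoint blocks that together cover $[n]$. Since $|S|=t<t+1$, by pigeonhole at least one block $B\in P_i$ satisfies $B\cap S=\varnothing$, and its characteristic vector $\vu\in\cC_i$ meets $\supp(\vu)\cap S=\varnothing$, as required.

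The only nontrivial ingredient is the existence of a Baranyai decomposition, which is a classical combinatorial result; the rest is a straightforward pigeonhole argument together with a counting identity for the number of parallel classes. No further obstacle is anticipated.
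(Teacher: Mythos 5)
Your proof is correct and follows essentially the same route as the paper: invoke Baranyai's theorem to decompose $\binom{[n]}{w}$ into parallel classes of $t+1$ disjoint blocks each, count the classes as $\binom{n}{w}/(n/w)=\binom{n-1}{w-1}$, and use the pigeonhole argument (which the paper packages as its Proposition on resolvable set systems) to get the cooling property. No gaps.
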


When $w$ is fixed, we have that $t$ and $n$ are of the same order of magnitude and
the above construction attains the asymptotic upper bound $O(n^{w-1})$.
Unfortunately, usually no efficient encoding and decoding methods are known for this construction and
generally the only known encoding method involves listing all the $\binom{n-1}{w-1}$ codesets.
The exceptions are for small $n$ or when $w$ is very small, e.g. when $w=2$ or $w=3$~\cite{Bet74,DeMi02}.

Chee et. al~\cite{Cheeetal.2017} also proposed the following constructions of LPC codes
which have efficient coding schemes.

\begin{proposition}[Concatenation]\label{prop:concat}
Suppose that $q\le \sum_{i=0}^{w'}\binom{s}{i}$ and $q$ is a prime power and $t\le s$.
\begin{enumerate}[(i)]
\item If $t+1\le m/2$, then there exists an $(ms,t,mw')$-LPC code of size $q^{m-t-1}$.
\item If $t+1\le m\le q+1$, then there exists an $(ms,t,mw')$-LPC code of size $q^{m-t}$.
\end{enumerate}
\end{proposition}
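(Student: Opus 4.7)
The plan is to use a concatenation of an outer ``cooling-type'' family of subsets of $\F_q^m$ with an inner coordinate-wise map. Since $q \le \sum_{i=0}^{w'}\binom{s}{i}$, I can fix an injection $\phi : \F_q \to \{0,1\}^s$ whose image lies in $\{\vu \in \{0,1\}^s : \wt(\vu)\le w'\}$ and satisfies $\phi(0)=0$. Extending blockwise yields $\Phi : \F_q^m \to \{0,1\}^{ms}$ for which $\wt(\Phi(\vc))\le mw'$ automatically. Given any pairwise-disjoint family $\{\cD_1,\ldots,\cD_M\}$ of subsets of $\F_q^m$ with the outer property that for every $T \subseteq [m]$ of size $t$ and every $i \in [M]$ some $\vc \in \cD_i$ has $c_j=0$ for all $j \in T$, I claim $\{\Phi(\cD_1),\ldots,\Phi(\cD_M)\}$ is an $(ms,t,mw')$-LPC code: $\Phi$ is injective, so the image sets remain disjoint; each image codeword has weight at most $mw'$; and for any $S \subseteq [ms]$ with $|S|=t$, taking $T \subseteq [m]$ to be a $t$-subset containing every block-index that meets $S$, the outer property yields $\vc \in \cD_i$ with $c_j=0$ throughout $T$, so $\phi(c_j)=0$ on every touched block and $\Phi(\vc)$ is zero on $S$.

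It then remains to construct such outer families of the claimed sizes $q^{m-t-1}$ and $q^{m-t}$. For (i), I would take a partial spread of $(t+1)$-dimensional subspaces $V_1,\ldots,V_M$ of $\F_q^m$ and set $\cD_i := V_i \setminus \{0\}$: the pairwise condition $V_i\cap V_j = \{0\}$ gives disjointness, and for each $t$-subset $T$ the restriction $V_i \to \F_q^T$ has nontrivial kernel since $\dim V_i = t+1 > t$, producing a nonzero vector in $\cD_i$ vanishing on $T$. To realise $M = q^{m-t-1}$ such subspaces I would use the graph construction backed by a Gabidulin (MRD) code of $(m-t-1)\times(t+1)$ matrices over $\F_q$ whose nonzero differences all have full column rank $t+1$; the hypothesis $t+1\le m/2$ is exactly the dimension condition that lets this MRD code attain the claimed size. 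For (ii), take an MDS $[m,t]_q$ code $C_0$, which exists because $m \le q+1$ (e.g.\ a Reed--Solomon or extended Reed--Solomon code), and let $\cD_1,\ldots,\cD_{q^{m-t}}$ be the $q^{m-t}$ cosets of $C_0$ in $\F_q^m$. These partition $\F_q^m$, and since any $t$ coordinates of $C_0$ form an information set, the restriction map $C_0 \to \F_q^T$ is surjective for every $|T|=t$; hence each coset contains a vector that is zero on $T$.

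The main obstacle is producing the partial spread of size at least $q^{m-t-1}$ in (i) under the tight hypothesis $t+1 \le m/2$: this requires invoking the MRD/Gabidulin construction rather than anything elementary, and one has to verify that the resulting graphs indeed form a partial spread of the stated cardinality. Once that is in hand, the remaining ingredients --- the concatenation reduction, the MDS restriction property used in (ii), and the existence of $\phi$ --- are all straightforward.
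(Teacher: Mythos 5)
Your proposal is correct and follows essentially the same route as the source of this result (the paper states Proposition~\ref{prop:concat} without proof, citing \cite{Cheeetal.2017}): concatenate an outer $q$-ary $(m,t)$-cooling code with an inner injection of $\F_q$ into the ball $\{\vu\in\{0,1\}^s:\wt(\vu)\le w'\}$ sending $0$ to $\zero$, where the outer code of size $q^{m-t-1}$ comes from a partial $(t{+}1)$-spread of $\F_q^m$ (lifted MRD/Gabidulin construction, exactly where $t+1\le m/2$ enters) and the outer code of size $q^{m-t}$ comes from the cosets of an $[m,t]_q$ MDS code, whose projection onto any $t$ coordinates is surjective. Your argument is sound as written; it even shows the hypothesis $t\le s$ is not needed for this construction.
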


\begin{proposition}[Sunflower Construction]\label{prop:sunflower}
Let $r+t\le (n+s)/2$. If a linear $[n,s,w+1]_2$ code exists and a linear $[n-t,r,w+1]_2$ code does not exist,
then there exists an $(n,t,w)$-LPC code of size $2^{n-t-r}$.
\end{proposition}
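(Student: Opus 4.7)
The plan is to construct the LPC code via a syndrome-based partition of the weight-$\le w$ vectors in $\F_2^n$, using the given $[n,s,w+1]_2$ code $C$ and then invoking the nonexistence of $[n-t,r,w+1]_2$ to certify the LPC property. Let $H$ be an $(n-s)\times n$ parity-check matrix of $C$; since $C$ has distance at least $w+1$, every $w$ columns of $H$ are linearly independent. If $s>t+r$, I would first replace $C$ by any $(t+r)$-dimensional subcode, which inherits distance $\ge w+1$, so WLOG $s\le t+r$. Fix a surjective linear map $P\colon\F_2^{n-s}\twoheadrightarrow\F_2^{n-t-r}$ (which exists since $n-s\ge n-t-r$) and set $M:=PH$. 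For $\vv\in\F_2^{n-t-r}$, define
\[
\cC_{\vv}:=\bigl\{\vu\in\F_2^n:M\vu=\vv,\ \wt(\vu)\le w\bigr\}.
\]
These are $2^{n-t-r}$ pairwise disjoint subsets of $\{\vu\in\F_2^n:\wt(\vu)\le w\}$.

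The remaining and central task is to verify the LPC property: for every $t$-subset $S\subseteq[n]$ and every $\vv$, some $\vu\in\cC_{\vv}$ satisfies $\supp(\vu)\cap S=\varnothing$. Writing $M_{S^c}$ for the submatrix of $M$ on columns indexed by $S^c$, this is equivalent to: every vector in $\F_2^{n-t-r}$ is expressible as a sum of at most $w$ columns of $M_{S^c}$.

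I would argue this covering claim by contradiction, leveraging the nonexistence of $[n-t,r,w+1]_2$. Suppose some pair $(S,\vv^*)$ fails, so the entire coset $\{\vu'\in\F_2^{n-t}:M_{S^c}\vu'=\vv^*\}$ lies outside the Hamming ball of radius $w$. A rank computation—using the inequality $r+t\le(n+s)/2$ rewritten as $s\ge 2(r+t)-n$, together with the linear independence of any $w$ columns of $H$—shows that $\ker(M_{S^c})\subseteq\F_2^{n-t}$ has dimension exactly $r$. Combining this kernel with the unreachable affine direction $\vv^*$ would then yield an $r$-dimensional subspace of $\F_2^{n-t}$ with minimum distance $\ge w+1$, i.e., a linear $[n-t,r,w+1]_2$ code, contradicting the hypothesis. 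The chief technical hurdle is precisely this extraction step: turning a coverage failure into the forbidden linear code. I anticipate needing either a carefully chosen $P$ that reflects the structure of $H$, or a probabilistic averaging over $P$ showing that a generic choice works simultaneously for all $t$-sets $S$. Once this covering claim is established, every codeset $\cC_{\vv}$ contains a weight-$\le w$ avoider for any $t$-set, producing the required $(n,t,w)$-LPC code of size $2^{n-t-r}$.
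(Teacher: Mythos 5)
Your syndrome/coset approach has a genuine gap, and it sits exactly at the step you flag as the ``chief technical hurdle'': the hypotheses of the proposition cannot deliver your covering claim. The nonexistence of a linear $[n-t,r,w+1]_2$ code is a packing (minimum-distance) statement: applied to the $r$-dimensional space $K=\ker(M_{S^c})$, it only tells you that $K$ itself contains a \emph{nonzero vector of weight at most $w$}, i.e., it gives information about the zero coset of $K$ and nothing about the other cosets. Your extraction step therefore fails: adjoining the unreachable direction $\vv^*$ (or any coset representative) to $K$ produces a space that \emph{contains} $K$, which already has a nonzero weight-$\le w$ vector, so you never obtain an $[n-t,r,w+1]_2$ code and no contradiction arises. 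What your construction actually needs is that, for every $t$-set $S$, every syndrome in $\F_2^{n-t-r}$ is a sum of at most $w$ columns of $M_{S^c}$ --- a covering-radius-type property of the shortened code --- and minimum-distance hypotheses do not imply covering-radius bounds. There are further unresolved obligations: full rank of $M_{S^c}$ for all $|S|=t$ is equivalent to the row space of $M$ (the dual of $V=\ker M$) having minimum distance at least $t+1$, an extra assumption your choice of $P$ would have to guarantee; and note that in your construction the distance $w+1$ of the $[n,s,w+1]_2$ code plays no role at all, which is a sign the hypotheses are not being used as intended.

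The intended argument (this proposition is quoted from~\cite{Cheeetal.2017}, where it is proved by the sunflower construction) avoids cosets of a single subspace precisely to escape this covering issue. Take the $[n,s,w+1]_2$ code $C$ (or a subcode, if $s>t+r$) as a common core, and lift a partial $(t+r-s)$-spread of the quotient $\F_2^n/C\cong\F_2^{n-s}$ to $(t+r)$-dimensional subspaces $V_1,\ldots,V_M\supseteq C$ with $V_i\cap V_j=C$ for $i\ne j$; such a partial spread of size $M\ge 2^{n-t-r}$ exists exactly because $t+r-s\le (n-s)/2$, which is the condition $r+t\le(n+s)/2$ (this is where that inequality is really used, not as a rank count). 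The codesets are $\cC_i=(V_i\setminus C)\cap\cB(n,w)$; they are disjoint since the $V_i$ meet only in $C$. For any $t$-set $S$, the subspace of vectors of $V_i$ vanishing on $S$ has dimension at least $(t+r)-t=r$, so the nonexistence of an $[n-t,r,w+1]_2$ code forces it to contain a nonzero vector of weight at most $w$; since $C$ has minimum distance $w+1$, this vector is not in $C$, hence it lies in $\cC_i$, has weight at most $w$, and avoids $S$. Only the ``zero-coset'' consequence of the nonexistence hypothesis is ever needed --- one such statement per petal $V_i$ --- which is exactly what your single-subspace coset partition cannot replicate.
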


Finally, Proposition~\ref{CPCnTuransys} suggests a new method to construct $(n,t,w)$-CPC codes.
We just have to find a set with large number of pairwise disjoint Tur\'an $(n,n-t,w)$-systems.
One work in this direction was done in~\cite{EWZ95} where pairwise disjoint Tur\'an $(n,w+1,w)$-systems
were considered. Another possible construction based on Proposition~\ref{CPCnTuransys} is
to consider complements of pairwise disjoint Steiner systems. Such pairwise disjoint systems were considered
in~\cite{BSSS90,vPEt89} and for Steiner quadruple systems which will be used in the sequel in~\cite{Etzion.1991}.

\section{An Efficient Construction for Constant-Power Cooling Codes}
\label{sec:efficient}

In this section, we present a new construction of CPC codes which
has efficient encoding and decoding algorithms. Asymptotically, the
codes obtained by the construction attain the bound of Theorem~\ref{thm:upperbound}.
As was mentioned before,
there are three types of constructions for LPC codes in~\cite{Cheeetal.2017}. The first one is based
on decomposition of the complete hypergraph, the second one is a concatenation method based on $q$-ary cooling codes,
and the third one is a Sunflower Construction.
The construction in this section, is an explicit construction for CPC codes which combines the advantages
of the first two types of constructions. We first rephrase the construction based on decomposition
of the complete hypergraph in terms of set systems.
The construction is based on the following generalization of Proposition \ref{prop:decomp}.

\begin{proposition}\label{resolvable}
Let $(X,\cB)$ be a set system of order $n$, where $\cB$ is partitioned into $M$ partial parallel classes
$\cP_1, \cP_2, \ldots, \cP_M$. If $\cB \subseteq {X \choose w}$ and each $\cP_i$ has at least $t+1$ blocks,
then the codesets $\cP_1, \cP_2, \ldots, \cP_M$ form an $(n,t,w)$-CPC code.
\end{proposition}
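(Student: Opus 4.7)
The plan is to unpack both conditions of an $(n,t,w)$-CPC code from the definition and verify them one by one against the hypotheses, with the core argument being a simple pigeonhole on disjoint blocks.

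First I would translate the set-system language back into the binary-vector language via the canonical bijection between $\binom{X}{w}$ and weight-$w$ vectors in $\{0,1\}^n$. Each partial parallel class $\cP_i$ then becomes a family of weight-$w$ binary vectors whose supports are pairwise disjoint. Two observations are immediate: the codewords in $\cP_i$ all have weight exactly $w$ (so we are aiming for a CPC code, not merely LPC), and the codesets $\cP_1,\ldots,\cP_M$ are pairwise disjoint because they form a partition of $\cB$. This verifies the basic structural requirements in the definition of a CPC code.

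Next I would verify the cooling property. Fix an arbitrary $t$-subset $S \subseteq [n]$ and an index $i \in [M]$. The key observation is that the blocks in $\cP_i$ are pairwise disjoint subsets of $X$, so each element of $S$ can lie in at most one block of $\cP_i$. Hence the number of blocks in $\cP_i$ that meet $S$ is at most $|S| = t$. Since $\cP_i$ contains at least $t+1$ blocks by hypothesis, at least one block $B \in \cP_i$ satisfies $B \cap S = \varnothing$. The corresponding codeword $\vu \in \cP_i$ then has $\supp(\vu) \cap S = \varnothing$, which is exactly the avoidance property required in the definition of an $(n,t,w)$-LPC (and here CPC) code.

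Combining these two verifications closes the argument. There is no real obstacle here; the only thing to be careful about is to invoke the pairwise disjointness of blocks within a single $\cP_i$ (this is what makes each element of $S$ kill at most one block) rather than any disjointness between different $\cP_i$'s. The hypothesis $\cB \subseteq \binom{X}{w}$ is used to guarantee constant weight, while partial-parallel-class structure plus the size bound $|\cP_i| \geq t+1$ together drive the pigeonhole step.
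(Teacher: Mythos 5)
Your proof is correct and follows essentially the same route as the paper: since the blocks of a partial parallel class $\cP_i$ are pairwise disjoint, a $t$-subset $S$ can meet at most $t$ of them, so with at least $t+1$ blocks one block avoids $S$. You additionally spell out the disjointness of the codesets (coming from the partition of $\cB$) and the constant weight, which the paper leaves largely implicit, but the substance is identical.
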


\begin{proof}
By definition, each codeword of a codeset has weight $w$. Hence, to show that
$\cP_1, \cP_2, \ldots, \cP_M$ form an $(n,t,w)$-CPC code, we only have to prove that
given a $t$-subset $S$ of $X$ with the list of hottest wires and a codeset $\cP_i$, $1\leq i\leq M$,
there exists a block $B \in \cP_i$ such that $B \cap {S}=\varnothing$.
Since $\cP_i$ is a partial parallel class with at least $t+1$ codewords, it follows that ${S}$
intersects at most $t$ blocks of $\cP_i$.
Hence, there exists a block $B \in \cP_i$ such that $B \cap {S}=\varnothing$.
\end{proof}

The \emph{complete $k$-uniform hypergraph} $G=(V,E)$ has a vertex set $V$ with $n \geq k$
vertices, and each subset of $\binom{V}{k}$ is connected by an hyperedge.
The \emph{decomposition} of $G$ is a partition of the set of edges $E$ in $G$
into disjoint perfect matching. In other words, a partition into vertex-disjoint sets of edges, where each vertex of $V$
appears exactly once in each set of the partition.
The celebrated Baranyai's theorem~\cite[p. 536]{vLWi92} asserts that such a decomposition
always exists if $k$ divides the number of vertices in $V$.
Therefore, since a decomposition of the complete $k$-uniform hypergraph with vertex
set $X$ is a resolvable set system $(X, {X \choose k})$, if $k$ divides $|X|$, we recover Proposition~\ref{prop:decomp}.

\subsection{CPC Codes Based on Linear Codes}

Let $\cC$ be an $[N,K,D]_q$ code.
Using the codewords of $\cC$, we will show how to construct a set system with $q^{K-1}$ partial parallel classes,
each one has blocks of the same size,
and as a consequence Proposition~\ref{resolvable} yields a CPC code $\mD$.
To equip $\mD$ with efficient encoding and decoding schemes, we utilize the erasure-correcting algorithms of the linear code $\cC$.
These schemes are discussed in detail in Section~\ref{sec:encoding}.

For a set of coordinates $T$ and a vector $\bsg\in\F_q^{|T|}$,
we say that {\it $\bsg$ appears $\lambda$ times in $\cC$ at $T$} if there are $\lambda$ codewords
in~$\cC$ whose restriction on $T$ is $\bsg$. Since any two
codewords of~$\cC$ differ in at least $D$ coordinates, it follows that
they agree in at most $N-D$ positions. Hence, we have the following observations.

\begin{lemma}\label{lem:appearance}
Let $\cC$ be an $[N,K,D]_q$ code.
\begin{enumerate}[(i)]
\item For any $(N-D+1)$-subset of coordinates $T$ and any $\bsg\in \F_q^{N-D+1}$, $\bsg$ appears in at most one codeword of~$\cC$ at $T$.

\item For any $(N-D)$-subset of coordinates $T$ and any $\bt\in \F_q^{N-D}$, $\bt$ appears in at most $q$ codewords of~$\cC$ at $T$.
\end{enumerate}
\end{lemma}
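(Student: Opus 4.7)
The plan is to reduce both claims directly to the minimum distance property of $\cC$. In each case, I would suppose that two or more codewords share the same restriction to the coordinate set $T$, translate this into a lower bound on the number of coordinates on which those codewords agree, and then invoke the bound $D$ on the minimum distance of $\cC$ to control how many such codewords can coexist.

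For part~(i), I would take any two codewords $\vc_1, \vc_2 \in \cC$ whose restrictions to $T$ both equal $\bsg$. Since they agree on all $|T| = N-D+1$ coordinates in $T$, their Hamming distance is at most $N - (N-D+1) = D-1$. The minimum distance assumption forces $\vc_1 = \vc_2$, so $\bsg$ appears in at most one codeword at $T$.

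For part~(ii), I would let $\vc_1, \ldots, \vc_\lambda$ be codewords of $\cC$ whose restrictions to $T$ all equal $\bt$, where $|T| = N - D$. They pairwise agree on the $N-D$ coordinates of $T$, so any two of them can differ only on the complement $[N] \setminus T$, which has size exactly $D$. Combined with the minimum distance bound $D$, this forces any two of them to differ on \emph{every} coordinate of $[N] \setminus T$. Fixing any single coordinate $j \in [N]\setminus T$, the values that $\vc_1, \ldots, \vc_\lambda$ take at position $j$ must therefore be $\lambda$ pairwise distinct elements of $\F_q$, yielding $\lambda \leq q$.

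There is no substantial obstacle here; the argument is a direct pigeonhole-style application of the minimum distance. The only point worth highlighting is the sharpening in part~(ii): rather than merely observing that two matching codewords differ in at least $D$ positions somewhere, one uses the constraint $|[N]\setminus T| = D$ to conclude that they must differ on \emph{all} positions of the complement, which is precisely what makes $q$ (and not some weaker constant) the right bound.
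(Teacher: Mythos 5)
Your proof is correct and takes essentially the same approach as the paper: part (i) is the identical minimum-distance argument, and part (ii) rests on the same key fact that two codewords agreeing on $T$ together with one further coordinate would be at distance at most $D-1$. The only cosmetic difference is that you argue (ii) directly (the values at a fixed coordinate outside $T$ must be pairwise distinct, so there are at most $q$ such codewords), whereas the paper phrases it as a contradiction via part (i) after applying the pigeonhole principle to $q+1$ codewords.
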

\begin{proof}
$~$
\begin{enumerate}[(i)]
\item If $\bsg\in \F_q^{N-D+1}$ appears twice in codewords of~$\cC$ at an $(N-D+1)$-subset
of coordinates $T$, then the two related codewords have distance at most $D-1$, a contradiction.

\item If $\bt\in \F_q^{N-D}$ appears in $q+1$ codewords of~$\cC$ at an $(N-D)$-subset of coordinates $T$, then let
$t$ be a coordinate not in $T$. In at least two of the related codewords coordinate $t$
has the same symbol. We add this symbol to $\bt$ to obtain $\bsg\in \F_q^{N-D+1}$ which
appears in two codewords of~$\cC$ at the $(N-D+1)$-subset $T \cup \{t\}$, contradicting claim (i)
of this lemma.
\end{enumerate}
\end{proof}

For a code $\cC$ and a subset of coordinates $T$, let $\cC|_T$ denotes the set of codewords
restricted to the coordinates of $T$, i.e., the projection of $\cC$ into the set of coordinates indexed by $T$.
For a word $\vu$, let $\vu|_T$ denotes the restriction of $\vu$
to the coordinates of $T$. Finally, for a matrix $\vG$, let $\vG|_T$ denotes the submatrix of $\vG$
obtained from the columns indexed by $T$.

\begin{lemma}\label{submatrix}
Let $\cC$ be an $[N,K,D]_q$ code. If $\vG$ is a generator matrix
of $\cC$, then every $K \times (N-D)$ submatrix of $\vG$ has rank either $K$ or $K-1$.
Furthermore, there exists a $K \times (N-D)$ submatrix of $\vG$ whose rank is $K-1$.
\end{lemma}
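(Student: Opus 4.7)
The plan is to translate every claim about the rank of $\vG|_T$ into a claim about the projected code $\cC|_T$, and then invoke Lemma~\ref{lem:appearance}. For any $(N-D)$-subset $T\subseteq [N]$, the row space of $\vG|_T$ is precisely the projection $\cC|_T$ of the row space of $\vG$, so $\text{rank}(\vG|_T)=\dim_{\F_q}\cC|_T$. Thus it suffices to show this dimension always equals $K$ or $K-1$, and that at least one choice of $T$ achieves $K-1$.

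For the first assertion, the upper bound $\text{rank}(\vG|_T)\le K$ is trivial since $\vG$ has only $K$ rows. For the lower bound, the fibers of the projection map $\pi_T:\cC\to\F_q^{N-D}$, $\vu\mapsto \vu|_T$, are cosets of $\ker\pi_T$, so every nonempty fiber has exactly $q^{K-\dim\cC|_T}$ elements. Lemma~\ref{lem:appearance}(ii) guarantees that each such fiber has at most $q$ elements, which forces $\dim\cC|_T\ge K-1$, completing the first part.

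For the second assertion I would exhibit an explicit $T$ for which the rank drops to $K-1$, built from a minimum-weight codeword. Since the minimum distance is $D$, there exist two codewords at Hamming distance exactly $D$, and their difference is a nonzero $\vc\in\cC$ with $\wt(\vc)=D$. Setting $T=[N]\setminus\supp(\vc)$ gives $|T|=N-D$ and $\vc|_T=\zero$, so $\vc\in\ker\pi_T$; hence the kernel is nontrivial and $\dim\cC|_T\le K-1$. Combined with the first part, equality holds.

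There is no serious obstacle here; the whole argument is a short application of Lemma~\ref{lem:appearance}(ii) together with the existence of a codeword of weight exactly $D$. The only conceptual step is the identification $\text{rank}(\vG|_T)=\dim\cC|_T$, which is a standard property of generator matrices and needs no more than a one-line justification.
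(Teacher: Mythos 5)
Your proof is correct and follows essentially the same route as the paper: the lower bound $\mathrm{rank}(\vG|_T)\ge K-1$ via Lemma~\ref{lem:appearance}(ii) applied to a fiber (the paper uses the fiber over the all-zeroes vector, i.e.\ the kernel of $\vx\mapsto\vx\vG|_T$), and the existence of a rank-$(K-1)$ submatrix by taking $T$ to be the complement of the support of a minimum-weight codeword. The only cosmetic difference is that you phrase the counting in terms of fibers of the projection $\cC\to\cC|_T$ rather than the kernel of the map from the information space, which is equivalent.
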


\begin{proof}
Let $T$ be a subset of $N-D$ coordinate positions and assume the corresponding $K \times (N-D)$ submatrix~$\vG|_T$
has rank~$r$, where $r\leq K$.  Let $\phi_T$ be the linear map from $\F_q^{K}$ to $\F_q^{N-D}$
defined by $\phi_T(\vx)=\vx \vG|_T$. Clearly, the dimension of the kernel of $\phi_T$ is $K-r$.
Hence, the all-zeroes vector of length $N-D$ appears in $q^{K-r}$ codewords of~$\cC|_T$.
By Lemma~\ref{lem:appearance} the all-zeroes vector appears in at most $q$ codewords of~$\cC|_T$
at~$T$ which implies that $K-r \leq 1$ and therefore $r \geq K-1$.

Let $\vu$ be a codeword of~$\cC$ with minimum weight~$D$, $T$ be the subset of $[N]$ not in the support of $\vu$,
i.e. $T = [N] \setminus \supp(\vu)$.
Let~$\vx$ be the information vector of length~$K$ such that~$\vu = \vx \vG$.
Since $\vu$ has weight $D$, it follows that $|\supp(\vu)|=D$ and hence the size of $T$ is $N-D$.
Since $\vu$ has \emph{zeroes} in the coordinates of $T$, it follows that for the $K \times (N-D)$ submatrix $\vG|_T$
of~$\vG$ we have that $\vx \vG|_T={\bf 0}$.
Therefore, the rank of $\vG_T$ is at most $K-1$.

Thus the rank of $\vG|_T$ is at least $K-1$ by the first part of the proof and
at most $K-1$ by the second part of the proof, which implies that
the rank of $\vG|_T$ is $K-1$.
\end{proof}

We are now ready to present the new efficient construction for CPC codes.

\begin{construction}
\label{LC2CPCC}
\label{con:main}
Let $\cC$ be an $[N,K,D]_q$ code and $\vG$ be a generator matrix of $\cC$, where
the last $N-D$ columns of $\vG$ form a $K \times (N-D)$ submatrix of $\vG$ whose rank is $K-1$.

\begin{itemize}
\item Partition the codewords of $\cC$ into disjoint codesets $\cC_1,\cC_2,\ldots,\cC_M$ such that
two codewords $\vu$ and $\vv$ are in the same codeset if and only if they agree on their last $N-D$ symbols.

\item For each $i\in[M]$, truncate the codewords in $\cC_i$ to length $w$ by removing their last $N-w$ symbols.
In other words, set $\cC'_i \triangleq \{\vu|_{[w]}: \vu\in \cC_i\}$ for each $i\in[M]$.

\item For each $i \in [M]$ construct the set system
$(X,\cD_i)$, where $X=\F_q\times [w]$ and
$$
\cD_i=\{(x_j,j) ~:~ \vx=x_1x_2\cdots x_w \in \cC_i',~  j\in [w]\}.
$$
\end{itemize}
\end{construction}

\begin{theorem}
\label{thm:main_pmtr}
If $N-D+1\leq w\leq D$, then the collection of codesets $\mD=\{\cD_1,\cD_2,\ldots, \cD_M\}$
is an $(n,t,w)$-CPC code of size $M=q^{K-1}$, where $n=qw$ and $t = q-1$.
\end{theorem}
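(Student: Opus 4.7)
The plan is to verify the hypotheses of Proposition~\ref{resolvable} for the set system $(X,\bigcup_i \cD_i)$ with $X=\F_q\times[w]$, so that $|X|=qw=n$ automatically. Concretely I will check: (a) each block has size exactly $w$; (b) the $\cD_i$'s are pairwise disjoint and partition the block set; (c) each $\cD_i$ is a partial parallel class, i.e., its blocks are vertex-disjoint in $X$; and (d) each $\cD_i$ has at least $t+1=q$ blocks. Once these are in place, Proposition~\ref{resolvable} delivers the CPC property, and a dimension count supplies $M=q^{K-1}$. Part (a) is immediate: the block $\{(x_j,j):j\in[w]\}$ contains one point per value of $j\in[w]$, and these are distinct.

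To handle the count, I would consider the linear map $\phi\colon \F_q^K\to \F_q^{N-D}$ defined by $\vx\mapsto \vx\,\vG|_{\{D+1,\ldots,N\}}$. By the hypothesis on $\vG$, $\phi$ has rank $K-1$, so $|\mathrm{im}\,\phi|=q^{K-1}$ and $|\ker\phi|=q$. The codesets $\cC_i$ are exactly the fibres of $\phi$, so there are $M=q^{K-1}$ of them and each contains $q$ codewords. To conclude $|\cD_i|=q$ and that distinct codewords of $\cC$ produce distinct blocks (giving both (b) and (d)), I need the truncation $\vu\mapsto \vu|_{[w]}$ to be injective on $\cC$. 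This is where the hypothesis $w\geq N-D+1$ enters: two codewords agreeing on $[w]$ agree on at least $N-D+1$ positions, so Lemma~\ref{lem:appearance}(i) forces them to coincide.

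The main obstacle is part (c), the partial-parallel-class property, which is where the upper bound $w\leq D$ plays its role. Given $\vu,\vv\in\cC_i$ with $\vu\neq\vv$, the difference $\vu-\vv$ is a nonzero codeword of $\cC$ whose last $N-D$ entries vanish, so $\supp(\vu-\vv)\subseteq[D]$. Since $\vu-\vv$ has weight at least $D$, its support is exactly $[D]$; in particular $(\vu-\vv)_j\neq 0$ for every $j\in[w]\subseteq[D]$. Hence $u_j\neq v_j$ for all $j\in[w]$, which means the blocks $\{(u_j,j):j\in[w]\}$ and $\{(v_j,j):j\in[w]\}$ share no point of $X=\F_q\times[w]$. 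Thus the $q$ blocks of $\cD_i$ are pairwise disjoint in $X$, so $\cD_i$ is a partial parallel class, and Proposition~\ref{resolvable} yields the desired $(qw,q-1,w)$-CPC code of size $q^{K-1}$.
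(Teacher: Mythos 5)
Your proposal is correct and follows essentially the same route as the paper's proof: the rank-$(K-1)$ map given by the last $N-D$ columns yields $M=q^{K-1}$ codesets of size $q$, the hypothesis $w\geq N-D+1$ gives injectivity of the truncation (hence disjointness of the $\cD_i$ and $|\cD_i|=q=t+1$), and $w\leq D$ forces codewords within a codeset to differ in every coordinate of $[w]$, making each $\cD_i$ a partial parallel class to which Proposition~\ref{resolvable} applies. The only cosmetic difference is that you phrase the last step via the support of the difference codeword $\vu-\vv$, while the paper argues directly from the minimum distance; both are valid.
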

\begin{proof}
Clearly, by the definition of the construction we have that $n=qw$ and each codeword
has weight $w$. Hence, to complete the proof it is sufficient to show that $M=q^{K-1}$, the $M$ codesets are pairwise disjoint,
and for any $t$-subset $S$ of coordinates from $X$ and each $i \in [M]$, there exists a codeword $\vu$
in the codeset $\cD_i$ such that $\supp(\vu)\cap S = \varnothing$.

\begin{enumerate}
\item Let $\vG'$ be the $K \times (N-D)$ submatrix of $\vG$ formed from the last $N-D$ columns of $\vG$.
Consider the linear map $\phi$ from $\F_q^K$ to $\F_q^{N-D}$ defined by $\phi(\vx)=\vx \vG'$.
Since the rank of $\vG$ is $K-1$, it follows that the image of~$\phi$ has dimension $K-1$ and
the kernel of $\phi$ has dimension one. It follows that $M=q^{K-1}$ and $|\cC_i|=q$ for each $i \in [M]$.

\item The minimum distance of $\cC$ is $D$ and hence each two codewords of $\cC$
can agree in at most $N-D$ coordinates, i.e. they differ in any subset of $N-D+1$ coordinates.
Since $w \geq N-D+1$ and the codewords of $\cC$ were shortened in their last $N-w$ coordinates,
it follows that all the shortened codewords of~$\cC$ are distinct.
Thus $\cC_1',\cC_2', \ldots, \cC_M'$ are pairwise disjoint and $|\cC_i'|=|\cC_i|=q$.
Now, it can be easily verified by the definition of~$\cD_i$ that
$\cD_1,\cD_2, \ldots, \cD_M$ are pairwise disjoint and $|\cD_i|=q$ for each $i \in [M]$.

\item Each two codewords of $\cC_i$ agree on their last $N-D$ coordinates and since their distance
is at least $D$, it follows that they differ in the first $D$ coordinates. Since $w \leq D$, this implies that
any two codewords of~$\cC_i'$ differ in all their $w$ coordinates. Hence, by the definition
of~$\cD_i$ it implies that each two codewords of~$\cD_i$ differ in their nonzero coordinates.
Therefore, the codewords in~$\cD_i$ are pairwise disjoint, i.e.,
$\cD_i$ is a partial parallel class. We also have that $|\cD_i|=q$ for each $i \in [M]$.
Hence, $\cD_i$ is a parallel class and as in the proof of Proposition~\ref{resolvable} we have that for any $t$-subset $S$,
$\cD_i$ has a codeword $\vu$ such that $\supp(\vu)\cap S = \varnothing$.
\end{enumerate}
Thus, the the required claims were proved and hence the collection of codesets
$\mD=\{\cD_1,\cD_2,\ldots, \cD_M\}$ is a $(qw,q-1,w)$-CPC code of size $M=q^{K-1}$.
\end{proof}

For a given $[N,K,D]_q$ code $\cC$ and its generator matrix $\vG$ in Construction~\ref{con:main},
we need to find a minimum weight codeword in $\cC$ in order to determine a $K \times (N-D)$-submatrix of $\vG$
with rank $K-1$, i.e., to find a permutation of the columns of $\vG$ such that the last $N-D$ coordinates of $\vG$
will have rank $K-1$.
Finding the minimum distance of a code is an NP-hard problem and the decision
problem is NP-complete~\cite{Var97}. Therefore,
we focus on certain families of codes where it is computationally easy
to find minimum weight codewords. One such family is the {\em maximum distance separable (MDS)} codes.
Recall that a linear $[N,K,D]_q$ code is an MDS code if $D=N-K+1$~\cite[Ch.11]{MacWilliams.1977}.
If the code $\cC$ in Construction~\ref{con:main} is an MDS code, then every $K$ columns of $\vG$ are
linearly independent and hence each $K\times (N-D)$ submatrix of $\vG$ has rank $K-1$ since $N-D=K-1$.
Therefore, we may use any $N-D$ coordinate as the last $N-D$ coordinates of $\cC$.
It is well known that MDS codes exist for the following parameters.

\begin{theorem}[{see~\cite[Ch.11]{MacWilliams.1977}}]
\label{thm:MDS}
Let $q$ be a prime power. If $D\geq 3$, then there exists an $[N,K,D]_q$ MDS code
if $N\leq q+1$ for all $q$ and $2\leq K\leq q-1$, except when $q$ is even and $K\in\{3,q-1\}$, in which case $N\leq q+2$.
\end{theorem}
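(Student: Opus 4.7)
The plan is to exhibit MDS codes of the stated parameters explicitly. For the main range $N \le q+1$ I would use extended Reed--Solomon codes, and for the exceptional case of even $q$ with $K\in\{3,q-1\}$ I would use the hyperoval construction together with the fact that the dual of an MDS code is MDS.

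For the main range, fix an enumeration $\alpha_1,\ldots,\alpha_q$ of $\F_q$ and consider the $K\times(q+1)$ matrix whose $j$th column (for $1\le j\le q$) is $(1,\alpha_j,\alpha_j^2,\ldots,\alpha_j^{K-1})^T$ and whose $(q+1)$st column is $(0,\ldots,0,1)^T$. I would then verify that any $K$ of its columns form a nonsingular matrix: choosing $K$ columns not containing the last gives a Vandermonde determinant in $K$ distinct scalars of $\F_q$, while selecting $K-1$ of the first $q$ columns together with the last reduces, after Laplace expansion along the last column, to a Vandermonde determinant of size $K-1$. Hence this matrix generates an $[q+1,K,q+2-K]_q$ MDS code, and deleting any $q+1-N$ of its columns (i.e.\ puncturing) yields MDS codes of every intermediate length $N$ with $K\le N\le q+1$. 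The hypothesis $D\ge 3$ forces $K\le N-2\le q-1$, so the construction is non-degenerate.

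For the exceptional case $q$ even and $K=3$, I would take the $3\times(q+2)$ matrix whose columns are the homogeneous coordinates of the $q+2$ points of a hyperoval in $PG(2,q)$, i.e.\ a set of $q+2$ points no three of which are collinear; this non-collinearity translates exactly into every $3\times 3$ submatrix being nonsingular, giving an $[q+2,3,q]_q$ MDS code. The regular hyperoval $\{(1,t,t^2):t\in\F_q\}\cup\{(0,1,0),(0,0,1)\}$ provides an explicit witness, and checking non-collinearity of any three of its points reduces to a small determinant that is nonzero precisely because $\mathrm{char}(\F_q)=2$. The case $K=q-1$ then follows by passing to the dual code, which is $[q+2,q-1,4]_q$ and remains MDS.

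The main obstacle is the hyperoval case: the reason the exception is confined to even $q$ is that for odd $q$ the maximum size of a set of points in general position in $PG(2,q)$ is $q+1$ rather than $q+2$, so no such extension of Reed--Solomon codes can exist. Everything else in the proof amounts to standard Vandermonde calculations and the standard duality of MDS codes.
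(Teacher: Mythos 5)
Your proposal is correct and is essentially the argument behind the result as cited: the paper gives no proof of Theorem~\ref{thm:MDS}, referring to MacWilliams--Sloane Ch.~11, where the same construction is used, namely (doubly) extended Reed--Solomon codes plus puncturing for $N\le q+1$, and for even $q$ the hyperoval (triply-extended Reed--Solomon) code for $K=3$ together with its dual for $K=q-1$. Your Vandermonde and duality verifications are the standard ones and are sound.
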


Setting $N=q+1$, $K=w$, $D=q-w+2$, and using an $[N,K,D]_q$ MDS code as the code $\cC$ in Construction~\ref{con:main},
we have that $w \leq D =q-w+2$, i.e., $q \geq 2w-2$.
Hence, Theorem~\ref{thm:main_pmtr} yields the following corollary.

\begin{corollary}
\label{corodirect}
Let $n,t$ and $w$ be positive integers.
If $q=n/w$ is a prime power and $q \geq 2w-2$, then there exists an $(n,q-1,w)$-CPC code of size
$({n}/{w})^{w-1}$.
\end{corollary}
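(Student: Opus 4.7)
The plan is to instantiate Construction~\ref{con:main} with an MDS code supplied by Theorem~\ref{thm:MDS}. Concretely, I would choose $N = q+1$, $K = w$, and $D = N - K + 1 = q - w + 2$, so that an $[N,K,D]_q$ MDS code $\cC$ exists by Theorem~\ref{thm:MDS}. Before invoking that theorem I would check its hypotheses: $N \le q+1$ holds with equality, the constraint $2 \le K \le q-1$ follows from $w \ge 2$ together with $q \ge 2w - 2$ (and the even-$q$ exception only relaxes the bound on $N$), and $D \ge 3$ follows from $D = q - w + 2 \ge w$.

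Next I would verify the structural requirement of Construction~\ref{con:main}, namely that the last $N - D = K - 1 = w - 1$ columns of a generator matrix $\vG$ of $\cC$ form a $K \times (K-1)$ submatrix of rank $K-1$. Because $\cC$ is MDS, every $K$ columns of $\vG$ are linearly independent, so every $K \times (K-1)$ submatrix has rank exactly $K-1$; alternatively this is immediate from Lemma~\ref{submatrix}. Hence any ordering of the coordinates is admissible and Construction~\ref{con:main} can be applied to $\cC$ without further preparation.

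Then I would check the range condition $N - D + 1 \le w \le D$ required by Theorem~\ref{thm:main_pmtr}. For the chosen parameters, $N - D + 1 = K = w$, so the left inequality holds with equality, while $w \le D = q - w + 2$ is equivalent to $q \ge 2w - 2$, which is precisely the hypothesis of the corollary. Theorem~\ref{thm:main_pmtr} then yields an $(n',t',w)$-CPC code with $n' = qw = n$, $t' = q - 1$, and $M = q^{K-1} = q^{w-1} = (n/w)^{w-1}$, which is exactly the claim.

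In this plan there is no substantial mathematical obstacle: the real work has already been carried out in Theorem~\ref{thm:main_pmtr}, and the role of the corollary is merely to align its parameters with a concrete MDS code. The most delicate step is the simultaneous parameter matching between Theorem~\ref{thm:MDS} and Theorem~\ref{thm:main_pmtr}, and it is here that the hypothesis $q \ge 2w - 2$ is used, both to guarantee $w \le D$ and to ensure $K \le q-1$ in the MDS existence statement.
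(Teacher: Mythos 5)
Your proposal is correct and follows essentially the same route as the paper: instantiate Construction~\ref{con:main} with an $[q+1,\,w,\,q-w+2]_q$ MDS code from Theorem~\ref{thm:MDS}, note that the MDS property makes every $K\times(N-D)$ submatrix of the generator matrix have rank $K-1$, and check $N-D+1\le w\le D$, which is exactly $q\ge 2w-2$, so Theorem~\ref{thm:main_pmtr} gives the $(qw,q-1,w)$-CPC code of size $q^{w-1}=(n/w)^{w-1}$. The only difference is that you spell out the hypothesis checks for Theorem~\ref{thm:MDS} slightly more explicitly than the paper does, which is harmless.
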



In Corollary~\ref{corodirect}, when $w$ is fixed, $t=q-1$ has the same order of magnitude as $n$.
Hence, the codes constructed in this case asymptotically attain the upper bound $O(n^{w-1})$.
We also note that for some parameters, these CPC codes are much larger than the LPC codes provided by
Propositions \ref{prop:concat} and \ref{prop:sunflower}.

\begin{example} By choosing $n=96$, $w=6$ and $t=15$, Corollary~\ref{corodirect} yields a $(96,15,6)$-CPC code of size $16^5=2^{20}$.

In contrast, suppose we use Proposition \ref{prop:concat} to construct a $(96,t,6)$-LPC code with $t\le 15$.
The largest size $16^5=2^{20}$ is obtained by choosing $m = 6$, $t = 1$, $s = 16$, $w'=1$, and $q=16$.
The resulting $(96,1,6)$-LPC code has the same size as the CPC obtained by Corollary~\ref{corodirect},
but the cooling capability of the former is clearly much weaker.
Proposition \ref{prop:sunflower}, on the other hand, yields a $(96,15,6)$-LPC code of size $2^{16}$ by choosing $s=81$ and $r=65$.
This code has similar parameters, but its size is much smaller.
\end{example}


\begin{example}
If we choose a $[17,8,9]_9$ code (see \cite{Grassl}).
If $w=9$ and $t=8$ in Construction~\ref{con:main}, then we obtain an $(81,8,9)$-CPC code of size $9^7\approx 2^{22.189}$.

In contrast, the largest $(81,8,9)$-LPC code obtained from Proposition \ref{prop:concat} has size $9\approx 2^{3.17}$
by choosing ${m = s = q= 9}$, $w'=1$.
Proposition \ref{prop:sunflower}, on the other hand, yields an $(81,8,9)$-LPC of size $2^{21}$ by choosing $s=54$ and $r=52$.
\end{example}

\subsection{Encoding and Decoding Schemes}
\label{sec:encoding}
{
We continue in this subsection and discuss the encoding and decoding schemes for the code $\mD$
obtained in Construction~\ref{LC2CPCC}.
Let $\vG$ be a generator matrix of the $[N,K,D]_q$ code $\cC$,
where the last $N-D$ columns
of $\vG$ form a $K \times (N-D)$ submatrix $\vG'$ whose rank is $K-1$. Furthermore, w.l.o.g. we assume that $\vG$ has the form
\begin{equation*}
\vG=
\left(
  \begin{array}{cc}
    \vA & \vI_{K-1}\\
   \bta_K & 0 \cdots 0\\
  \end{array}
\right),
\end{equation*}
where $\vI_{K-1}$ is the identity matrix of order $K-1$.

Each codeset in $\mD$ will be identified by the unique vector from $\F_q^{K-1}$.
This is possible since the number of codesets is $q^{K-1}$.
For $\bsg\in\F_q^{K-1}$, let $\cC_{\bsg}$ be the set of $q$ codewords from $\cC$ whose suffix of length $K-1$ is $\bsg$.
Furthermore, let $\cC_{\bsg}'$ and $\cD_{\bsg}$ be the derived codesets as defined in Construction~\ref{LC2CPCC}.

Given a $t$-subset $S$ of $\F_q\times[w]$ and a word $\bsg=(\sigma_1,\sigma_2,\ldots,\sigma_{K-1})\in \F_q^{K-1}$,
our objective for encoding of Construction~\ref{LC2CPCC} is to find a codeword $\vu \in \cD_{\bsg}$ such that $\supp (\vu) \cap S=\varnothing$.
Let $\bta_i$ be the $i$-th row of $\vG$. Let
$$\vr= \bsg \vA|_{[w]} = \sum_{i=1}^{K-1} \sigma_i \bta_i|_{[w]} ,$$
and hence the codeset $\cC'_{\bsg}$ is
$$\cC'_{\bsg}=\{\vr+\lambda \bta_K|_{[w]}: \lambda\in \F_q\}.$$
The codeset $\cD_{\bsg}$ is derived from $\cC'_{\bsg}$ as indicated in Construction~\ref{LC2CPCC}, and hence
we can consider the intersection of each one of the $q$ blocks in $\cD_{\bsg}$ with $S$
to find the block $B$ such that $B \cap S=\varnothing$.

Hence, for the encoding, $O(n)$ multiplications over $\F_q$ are required to find $\cD_{\bsg}$.
During this computation we can also check whether each codeword
of $\cD_{\bsg}$ has nontrivial intersection with $B$ or not. Therefore,
there is no need for further computations to find $B$.

For the decoding, suppose that we have a codeword $\{(x_1,1),(x_2,2),\ldots,(x_{w},w)\}$.
By our choice we have that $w \geq N-D+1$ which implies that $D-1 \geq N-w$ and hence we can correct any $N-w$
erasures in any codeword of~$\cC$. Hence, the $N-w$ erasures in $(x_1,x_2,\ldots,x_{w}, ?, ?,\ldots,?)$ can be recovered
and the last $K-1$ symbols, $x_{N-K+2}, x_{N-K+3},\ldots,x_N$ are the information symbols.
In particular, if the code $\cC$ is a Reed-Solomon code, then by using Lagrange interpolation,
$O(w^3)$ multiplications are enough to perform the decoding, e.g.~\cite{RoRu00}.

\section{Error-Correcting CPC Codes}
\label{sec:ECC}
In this section we consider CPC codes that can correct transmission errors (`0' received as `1', or `1' received as `0').
An $(n,w,t)$-CPC which can correct up to $e$ errors will be called an $(n,t,w,e)$-CPECC (constant weight power error-correcting cooling) code.
First, Construction~\ref{LC2CPCC} will be used to produce CPECC codes by examining the minimum distance
of the constructed codes.

\begin{theorem}
\label{MDS2CPECC}
{
If the code $\cC$ used for Construction~\ref{LC2CPCC} is an $[N,K,D]_q$ code,
then the code $\mD$ obtained by Construction~\ref{LC2CPCC} is an $(n,t,w,e)$-CPECC code of size $M=q^{K-1}$,
where $n=qw$, $t= q-1$, and $e \geq w+D-N-1$.}
\end{theorem}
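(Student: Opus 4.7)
The plan is straightforward: by Theorem~\ref{thm:main_pmtr}, the code $\mD$ is already an $(n,t,w)$-CPC code with $n=qw$, $t=q-1$, and $M=q^{K-1}$, so the only new content is the error-correcting capability. Since in a CPC code the information carried by a transmission is the \emph{index} of the chosen codeset, what I need is a lower bound on the minimum binary Hamming distance between codewords lying in different codesets of $\mD$; once this distance is shown to be at least $2(w+D-N)$, the standard decoding-radius inequality will immediately yield that up to $w+D-N-1$ bit-flip errors are correctable.

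The heart of the argument is one short distance computation. Fix $\vu \in \cD_i$ and $\vv \in \cD_j$ with $i \neq j$. By Construction~\ref{con:main}, $\vu$ corresponds to $\vx=(x_1,\ldots,x_w) \in \cC'_i$, the truncation to $[w]$ of some codeword $\aaa \in \cC_i \subseteq \cC$, and similarly $\vv$ corresponds to $\vy=(y_1,\ldots,y_w) \in \cC'_j$, the truncation of some $\bbb \in \cC_j \subseteq \cC$. Since $\aaa$ and $\bbb$ are distinct codewords of the $[N,K,D]_q$ code, they disagree in at least $D$ coordinates of $[N]$; at most $N-w$ of those disagreements can fall in the discarded tail $[N]\setminus[w]$, so at least $D-(N-w) = w+D-N$ of them lie in $[w]$. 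Hence the truncations satisfy $|\{j \in [w] : x_j \neq y_j\}| \geq w+D-N$.

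Translating this to the binary codewords: by the definition of $\cD_i$, each index $j \in [w]$ on which $x_j \neq y_j$ contributes exactly $2$ to $d_H(\vu,\vv)$ (since $\vu$ has a $1$ at $(x_j,j)$ and a $0$ at $(y_j,j)$, while $\vv$ does the opposite), whereas an index with $x_j = y_j$ contributes $0$. Therefore $d_H(\vu,\vv) \geq 2(w+D-N)$ for any two codewords from different codesets, and nearest-codeword decoding identifies the correct codeset as long as the number of errors $e$ satisfies $2e+1 \leq 2(w+D-N)$, i.e.\ $e \leq w+D-N-1$. This is exactly the claimed bound $e \geq w+D-N-1$.

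I do not expect a serious obstacle: the whole proof is essentially the accounting of how many of the $D$ guaranteed coordinate disagreements between $\aaa$ and $\bbb$ can be ``absorbed'' by the $N-w$ truncated positions. The only subtle point worth flagging is that distinct codewords of $\cC$ must truncate to distinct elements of $\F_q^w$ (otherwise the inter-codeset distance argument would be vacuous), but this is guaranteed by the standing hypothesis $w \geq N-D+1$ of Theorem~\ref{thm:main_pmtr} together with Lemma~\ref{lem:appearance}(i).
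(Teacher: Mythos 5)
Your proposal is correct and follows essentially the same route as the paper: invoke Theorem~\ref{thm:main_pmtr} for the CPC parameters, observe that puncturing the last $N-w$ coordinates leaves at least $D-(N-w)$ disagreements, and that each disagreement doubles under the map to $\cD$, giving minimum distance at least $2(w+D-N)$ and hence correction of $w+D-N-1$ errors. The only cosmetic difference is that you bound only the inter-codeset distance (which indeed suffices, since the message is the codeset index), while the paper bounds the minimum distance of all of $\cD$ by the identical computation.
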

\begin{proof}
All the parameters of the code except for $e=w+D-N-1$ were proved in Theorem~\ref{thm:main_pmtr}.
Since the minimum distance of $\cC$ is $D$ and the code $\cC$ was punctured in the last
$N-w$ coordinates to obtain the code $\cC'$ (the union of the codesets $\cC_i'$, $1 \leq i \leq M$),
it follows that the minimum distance of $\cC'$ is at least $D-(N-w)$. By the definition of $\cD'$ (the union
of the codesets $\cD_i'$, $1 \leq i \leq M$) we have that if $\vu, \vu' \in \cC'$ differ in $\ell$
coordinates, then the related codewords in $\cD$ differ in $2 \ell$ positions. Hence, the minimum distance of $\cD$
is at least $2(D+w-N)$ and thus the number of errors that it can correct is $e \geq w+D-N-1$.
\end{proof}

Next, an algorithm which demonstrates the error-correction for an $(n,t,w,e)$-CPECC code
will be given. For simplicity, we will focus on a special example,
where our starting point is a Reed-Solomon code $\cC$ (which is of course
an MDS code), where $K=N-D+1=w-e$.

\begin{construction}
\label{RSECCcon}
Let $w$ and $e$ be  positive integers and $q$ be a prime power such that $q\geq 2w-e-1$.
Let $a_1,a_2, \ldots, a_{w},b_1,b_2,\ldots, b_{w-e-1}$ be $2w-e-1$ distinct elements of $\F_q$.
\begin{itemize}
\item For each polynomial $f(X)\in\F_q[X]$, define the following block on the point set $\F_q \times [w]$,
\[C_f=\{(f(a_j),j): j\in [w],~\deg(f)\leq w-e-1\}.\]
\item For each $\bsg=(\sigma_1,\sigma_2,\ldots, \sigma_{w-e-1}) \in \F_q^{w-e-1}$, let
\begin{align*}\
\cE_{\bsg}=\{&C_f:  f \in \F_q[X],~ \deg(f)\leq w-e-1,~ f(b_i)=\sigma_i \mbox{ for each }i \in[w-e-1]\}.
\end{align*}
\end{itemize}

\end{construction}

\begin{theorem}
\label{thm:ecc}
The code $\mE= \{ \cE_{\bsg} ~:~ \bsg \in \F_q^{w-e-1} \}$ is an $(n,t,w,e)$-CPECC code
of size $q^{w-e-1}$, where $n=qw$ and $t=q-1$.
\end{theorem}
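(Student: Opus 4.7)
My plan is to recognize Construction~\ref{RSECCcon} as a concrete instantiation of Construction~\ref{LC2CPCC} applied to a specific Reed--Solomon code, and then invoke Theorem~\ref{MDS2CPECC} directly. Specifically, take $\cC$ to be the Reed--Solomon code obtained by evaluating polynomials $f\in\F_q[X]$ of degree at most $w-e-1$ at the $N=2w-e-1$ distinct points $a_1,\ldots,a_w,b_1,\ldots,b_{w-e-1}$ (these exist since $q\geq 2w-e-1$). This is an $[N,K,D]_q$ MDS code with $K=w-e$ and $D=N-K+1=w$, so the parameters $N-D+1=w-e\leq w$ and $w\leq D=w$ required by Theorem~\ref{thm:main_pmtr} are both satisfied.

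Next, I would match the grouping. In Construction~\ref{LC2CPCC}, one orders the coordinates so that the first $w$ are the ``short'' coordinates and the last $N-D=w-e-1$ are used to index the codesets. Here the first $w$ coordinates correspond to evaluations at $a_1,\ldots,a_w$, so the shortened codeword associated with $f$ is $(f(a_1),\ldots,f(a_w))$ and its derived block in the point set $\F_q\times[w]$ is exactly $\{(f(a_j),j):j\in[w]\}=C_f$. The last $w-e-1$ coordinates are the evaluations at $b_1,\ldots,b_{w-e-1}$, and grouping codewords of $\cC$ by the suffix $\bsg=(\sigma_1,\ldots,\sigma_{w-e-1})$ is exactly the constraint $f(b_i)=\sigma_i$ defining $\cE_{\bsg}$. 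Hence $\mE$ is precisely the code $\mD$ produced by Construction~\ref{LC2CPCC}.

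With this identification, Theorem~\ref{MDS2CPECC} yields an $(n,t,w,e')$-CPECC code with $n=qw$, $t=q-1$, size $q^{K-1}=q^{w-e-1}$, and error-correcting capability $e'\geq w+D-N-1=w+w-(2w-e-1)-1=e$, which is exactly what is claimed. The only subtlety worth double-checking is the cardinality $|\cE_{\bsg}|=q$ and the pairwise disjointness of the blocks in $\cE_{\bsg}$; both are immediate, since the linear system $f(b_i)=\sigma_i$ for $i\in[w-e-1]$ imposes $w-e-1$ independent conditions on the $w-e$ coefficients of $f$ (the Vandermonde-type evaluation matrix at distinct $b_i$ has full row rank), leaving a one-parameter family of $q$ polynomials, and any two distinct such polynomials differ in at least $D=w$ evaluation points, hence on all $w$ of the coordinates $a_1,\ldots,a_w$, giving pairwise disjoint blocks. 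No step presents a real obstacle; the entire argument is essentially a verification that the combinatorial parameters line up with Theorem~\ref{MDS2CPECC}.
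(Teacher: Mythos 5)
Your proof is correct, but it takes a genuinely different route from the paper. The paper proves Theorem~\ref{thm:ecc} directly and self-containedly in four steps phrased entirely in the polynomial language of Construction~\ref{RSECCcon}: (1) distinct suffixes $\bsg\ne\bsg'$ give disjoint codesets because two polynomials of degree at most $w-e-1$ agreeing on $w$ points coincide; (2) blocks within one $\cE_{\bsg}$ are pairwise disjoint because a common point together with the $w-e-1$ shared values at the $b_i$ forces equality; (3) a counting argument ($q^{w-e}$ polynomials spread over $q^{w-e-1}$ codesets of at most $q$ blocks each) shows every codeset is a parallel class, so Proposition~\ref{resolvable} gives the CPC property; (4) $|C_f\cap C_g|\le w-e-1$ gives minimum distance $2e+2$. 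You instead identify Construction~\ref{RSECCcon} as Construction~\ref{LC2CPCC} instantiated with the $[2w-e-1,\,w-e,\,w]_q$ Reed--Solomon code (evaluations at $a_1,\ldots,a_w,b_1,\ldots,b_{w-e-1}$, with the $b_i$-coordinates playing the role of the last $N-D=K-1$ columns, whose rank condition is automatic for an MDS code), check $N-D+1\le w\le D$, and then quote Theorem~\ref{MDS2CPECC} to get size $q^{K-1}=q^{w-e-1}$ and $e'\ge w+D-N-1=e$; your coordinate-matching and the surjectivity of the suffix map are the only things that need verification, and you handle them. Your route buys economy and makes explicit that the paper's ``special example'' framing really is a corollary of the earlier machinery; the paper's direct proof buys a self-contained description at the polynomial level (intersection sizes of the blocks $C_f$, exact parallel-class structure) that is what Algorithm~\ref{algo3} and the subsequent recursive Construction~\ref{recursivecon} actually reuse, and it avoids any dependence on choosing a generator matrix and column ordering satisfying the hypotheses of Construction~\ref{LC2CPCC}.
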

\begin{proof}
It is an immediate observation from the definition of the point set $\F_q \times [w]$ and
the codeword $C_f$ that each codeword has length $qw$ and weight $w$.
The rest of the proof has four steps. In the first one we will prove that
for each $\bsg, \bsg' \in F_q^{w-e-1}$,
$\cE_{\bsg}$ and $\cE_{\bsg'}$ are disjoint whenever $\bsg\not=\bsg'$.
In the second step we will prove that for each $\bsg \in \F_q^{w-e-1}$ the blocks in $\cE_{\bsg}$ are pairwise
disjoint. As a result, by a simple counting argument in the third step it will be proved
that $\mE$ has $q^{w-e-1}$ codesets, each one has parallel class of size $q$, and
as a consequence $\mE$ is a $(qw,q-1,w)$-CPC code.
In the last step we will find the minimum Hamming distance of $\mE$ and as a result the number
of errors $e$ that it can correct.

\begin{enumerate}
\item Assume that there exist two codewords $C_f \in \cE_{\bsg}$ and $ C_g\in \cE_{\bsg'}$ such that
$\bsg \neq \bsg'$ and $C_f=C_g$.
Then $f$ and~$g$ agree on at least $w$ points and since the degrees of the polynomials are less than $w$,
it follows that $f=g$. It implies that
$\sigma_i = f(b_i) =g(b_i)=\sigma'_i$ for all $i \in [w-e-1]$ and hence $\bsg=\bsg'$, a contradiction.
Thus, $\cE_{\bsg}$ and~$\cE_{\bsg'}$ are disjoint whenever $\bsg\not=\bsg'$.

\item Assume that the blocks $C_f$ and $C_g$ in $\cE_{\bsg}$, where $f \neq g$, intersect at
the point $(x,i_0)$ for some $x\in \F_q$ and $i_0\in[w]$. It implies that $f(a_{i_0})=g(a_{i_0})$ and since
$C_f, C_g \in \cE_{\bsg}$, it follows that $f(b_i)=g(b_i)$ for each $i \in [w-e-1]$.
Therefore, $f$ and $g$ agree on at least $w-e$ points. Since the degrees of $f$ and $g$  are at most $w-e-1$,
it follows that $f=g$, a contradiction.  Therefore, the blocks in $\cE_{\bsg}$ are pairwise disjoint.
Recall that each block has size $w$ and the size of the point set of these blocks $\F_q \times [w]$ is $qw$.
Hence, each codeset $\cE_{\bsg}$ contains at most $q$ blocks.

\item The number of distinct polynomials in $\F_q[X]$ whose degrees are at most $w-e-1$ is $q^{w-e}$.
Each polynomial induces exactly one codeword in $\mE$.
Hence, $\mE$ contains exactly $q^{w-e}$ distinct codewords.
Since there are $q^{w-e-1}$ codesets and each one contains at most $q$ codewords,
it follows that each one contains exactly $q$ codewords. The length of a codeword is $qw$ and
the weight of a codeword is $w$ which implies that each codeset is a parallel class.
Thus, by Proposition~\ref{resolvable}, $\mE$ is a $(qw,q-1,w)$-CPC code.

\item Finally, for any two distinct codewords $C_f$ and $C_g$, where $f$ and $g$ have degree at most $w-e-1$,
we have that $|C_f\cap C_g| \leq w-e-1$ since larger intersection implies that $f = g$.
Therefore, the Hamming distance between $C_f$ and $C_g$ is at least $2e+2$.
Thus, the code $\mE$ has minimum Hamming distance at least $2e+2$ and it can correct $e$ errors.
\end{enumerate}

Thus, $\mE$ is an $(n,t,w,e)$-CPECC code
of size $q^{w-e-1}$, where $n=qw$ and $t=q-1$.
\end{proof}

The encoding scheme in Section~\ref{sec:encoding} can be easily adapted for the encoding of the CPECC code $\mE$. Algorithm~\ref{algo3}
illustrates the decoding scheme for the $(n,t,w,e)$-CPECC code $\mE$ obtained in Construction~\ref{RSECCcon}.

\begin{algorithm}
\caption{Error-Correction for the CPECC codes in Construction~\ref{RSECCcon}}
\label{algo3}
\begin{algorithmic}[1]
\REQUIRE a binary word $\vu \subset \F_q \times [w]$ ~~~~~~~~ \{the word received after the transmission of a codeword\}
\ENSURE a message $\bsg \in \F_q^{w-e-1}$  ~~~~~~~~~~~~~~~~~~\{the information word that was sent\}
\FOR{each $i\in[w]$}
\IF{ $|Y_i \triangleq \{(y,i): (y,i)\in \vu \}|=1$}
\STATE $y_i\gets y$, where $(y,i)$ is the unique pair in $Y_i$;
\ELSE
\STATE $y_i\gets$ `?';
\ENDIF
\ENDFOR
\STATE $\hat{\vy} \gets (y_1,y_2,\ldots, y_{w})$;
\STATE apply the decoding algorithm for Reed-Solomon codes on $\hat{\vy}$
\STATE The output of the algorithm is a polynomial $L(x)$ of degree $w-e-1$;
\STATE  $\bsg \gets (L(b_1), L(b_2), \ldots, L(b_{w-e-1}))$;
\RETURN $\bsg$;
\end{algorithmic}
\end{algorithm}

\begin{theorem}
Suppose that the codeword $\vc \in \mE$ obtained in Construction~\ref{RSECCcon} was submitted and
the word $\vu$ was received from $\vc$ with at most $e$ errors.
Then, Algorithm~\ref{algo3} returns the word $\bsg \in \F_q^{w-e-1}$ such that $\vc \in \cE_{\bsg}$.
\end{theorem}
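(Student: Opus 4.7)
The plan is to verify that Algorithm~\ref{algo3} produces the correct $\bsg$ by reducing the error-correction task to the standard erasure-and-error decoding of a Reed-Solomon code. Recall that $\vc = C_f = \{(f(a_j),j) : j \in [w]\}$ for some polynomial $f$ of degree at most $w-e-1$, and that $\bsg = (f(b_1),f(b_2),\ldots,f(b_{w-e-1}))$. Once I show that Step~10 of the algorithm correctly recovers $f$, the final evaluation in Step~12 trivially produces the correct $\bsg$, so the whole proof reduces to analyzing the quality of the length-$w$ vector $\hat{\vy}$ constructed in Steps~1--8.

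First I would set up the following per-column analysis. Fix $i \in [w]$ and let $e_i$ denote the number of bit flips occurring in the ``column'' $\F_q \times \{i\}$, so that $\sum_{i=1}^w e_i \leq e$. Among these flips, let $\delta_i \in \{0,1\}$ indicate whether the unique `1' at $(f(a_i),i)$ was flipped to `0', and let $\eta_i$ be the number of `0's flipped to `1' in that column, so $e_i = \delta_i + \eta_i$. Then $|Y_i| = 1 - \delta_i + \eta_i$, and a direct enumeration of cases yields:
\begin{itemize}
\item $|Y_i| = 1$ with $y_i = f(a_i)$ iff $(\delta_i,\eta_i) = (0,0)$, so $e_i = 0$;
\item $y_i = \,$`?' (an erasure in $\hat{\vy}$) iff $\{|Y_i| = 0$ or $|Y_i| \geq 2\}$, which forces $e_i \geq 1$;
\item $|Y_i| = 1$ with $y_i \neq f(a_i)$ (a symbol error in $\hat{\vy}$) iff $(\delta_i,\eta_i) = (1,1)$, so $e_i = 2$ (and more generally any symbol error forces $e_i \geq 2$).
\end{itemize}

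Next I would translate this into the decoding budget. Let $s$ be the number of positions where $\hat y_i$ is a wrong symbol and $r$ the number of erasure positions. The case analysis above gives
\[
2s + r \;\leq\; \sum_{i=1}^w e_i \;\leq\; e .
\]
On the other hand, the vector $(f(a_1),\ldots,f(a_w))$ is a codeword of the Reed-Solomon code of length $w$ and dimension $w-e$ over $\F_q$ with evaluation points $a_1,\ldots,a_w$; its minimum distance is $e+1$. The standard errors-and-erasures decoder (e.g., via Berlekamp-Welch or Forney) correctly recovers the transmitted codeword, and hence $f$ itself, whenever $2s + r \leq e$. Thus the polynomial $L(x)$ returned in Step~11 equals $f$, and Step~12 outputs $\bsg = (f(b_1),\ldots,f(b_{w-e-1}))$, which is exactly the label of the codeset $\cE_{\bsg}$ containing $\vc$.

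The main obstacle is the column-by-column bookkeeping in the first step: one must be careful that every way of distributing the $\le e$ bit flips among the columns produces a pattern with $2s+r\le e$, since a single adversarial column can contribute either two errors to $s$ or a smaller number to $r$. The enumeration above handles this, and the remainder of the argument is a direct invocation of the Reed-Solomon decoding guarantee justified by the minimum-distance computation $D = w - (w-e) + 1 = e+1$.
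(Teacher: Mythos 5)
Your proof is correct and follows essentially the same route as the paper's: both analyze each column $\F_q\times\{i\}$ to show an erasure in $\hat{\vy}$ costs at least one bit flip and a wrong symbol costs two, then invoke errors-and-erasures decoding of the length-$w$, dimension-$(w-e)$ Reed--Solomon code with minimum distance $e+1$ under the budget $2s+r\le e$. Your $(\delta_i,\eta_i)$ bookkeeping is just a slightly more explicit version of the paper's count of $e'$ erased columns plus at most $\lfloor (e-e')/2\rfloor$ erroneous ones.
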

\begin{proof}
Using the notation of Algorithm~\ref{algo3}, let $i\in [w]$, $Y_i \triangleq \{(y,i): (y,i)\in \vu\}$,
and $e' =| \{ i ~:~ |Y_i|\neq 1 \}|$. If $|Y_i|=0$ then an erasure occurred and this is reflected
as an erasure in $y_i$. If $|Y_i| > 1$ then
we also know that an error has occurred for at least one coordinate $(y,i)$. This will be also reflected
as an erasure in $y_i$. Hence, at least $e'$ erasure errors are reflected in $\hat{\vy}$ as a result
of at least $e'$ errors in these $Y_i$'s.
For the remaining $w-e'$ $Y_i$'s, while there may be errors, we know that each of these $Y_i$'s contains either
no errors or two errors.
Thus, the number of other erroneous $Y_i$'s is at most $\lfloor{(e-e')}/{2}\rfloor$.

The vector $\hat{\vy}$ is obtained by mapping the subsets $Y_1,Y_2,\ldots, Y_w$ to the elements of $\F_q\cup \{ ? \}$.
The word $\hat{\vy}$ was obtained from a codeword $\vx_f$ of a Reed-Solomon code
of length $N=w$, dimension $K=w-e$, and minimum Hamming distance $D=N-K+1=e+1$.
An error-correction algorithm for such a code is capable of correcting $e'$ erasures and
at most $\lfloor{(e-e')}/{2}\rfloor$ errors as required by Algorithm~\ref{algo3}.
\end{proof}

Using the Berlekamp-Welch algorithm~\cite{WB.1986} we can correct the errors with $O(q^3)$
operations~\cite{WB.1986}, and hence, Algorithm~\ref{algo3} has complexity $O(n^3)$.

\section{Recursive Construction}
\label{sec:recursion}

All the $(n,t,w)$-CPC codes obtained from  Proposition~\ref{prop:decomp} and Construction~\ref{con:main}
have $t= n/w -1$. In this section, we present a recursive construction that yields $(n,t,w)$-CPC codes
which will be designed especially for larger values of $t$.

For this purpose, recall the conditions of Proposition~\ref{resolvable}.
Let $(X,\cB)$ be a set system with a point set of size~$n$, where $\cB \subseteq {X \choose w}$ and
$\cB$ can be partitioned into $M$ partial parallel classes $\cP_1, \cP_2, \ldots, \cP_M$ to
form a code $\mE$ with $M$ codesets.
Suppose further that each partial parallel class $\cP_i$ has exactly $q$ blocks.
Let $S$ be a $t$-subset of $X$ and $\cP_i$ be a given partial parallel class.
If $t \geq q$, it might not be possible to choose a block/codeword in $\cP_i$ which avoids~$S$.
However, by the pigeonhole principle, we can find such a block/codeword which intersects~$S$ in at most $\floor{{t}/{q}}$ elements.
Given a $(w,\floor{t/q},w')$-LPC code $\mC$ it is possible to substitute it instead of each block/codeword
of $\cB$ and break up each codeword into codewords of weight at most $w'$.
This will enable to find a block/codeword of weight $w'$ which avoids $S$.
The following construction is based on this idea, where the code $\mE$ is constructed
similarly to the code in Construction~\ref{RSECCcon}.

\begin{construction}
\label{recursivecon}
Let $q \geq n+w-1$ be a prime power and let
$a_1,a_2, \ldots, a_{n}$, $b_1,b_2,\ldots, b_{w-1}$ be $n+w-1$ distinct elements of $\F_q$.
\begin{itemize}
\item Consider the point set $\F_q\times [n]$ and let
$$\cB=\{C_f \triangleq \{(f(a_j),j): j\in [n]\}: f \in \F_q[x],~ \deg(f)\leq w-1\}.$$
Note that the size of each block $C_f$ is $n$.

\item For each $\bsg=(\sigma_1,\sigma_2,\ldots, \sigma_{w-1}) \in \F_q^{w-1}$, let
\begin{align*}\
\cE_{\bsg}=\{&C_f:  f \in \F_q[X],~ \deg(f)\leq w-1,~ f(b_i)=\sigma_i \mbox{ for each }i \in[w-1]\}.
\end{align*}
Similarly to the proof of Theorem~\ref{thm:ecc} one can show that
$\cB$ is partitioned by $\cE_{\bsg}$, $\bsg \in \F_q^{w-1}$ into $q^{w-1}$ parallel classes,
each one of size $q$.
Label the parallel classes and their blocks by $\cP_i =\{B_{ij}: j\in [q]\}$ for $i\in[q^{w-1}]$.

\item Let $\mD$  be an $(n,t,w)$-CPC code of size $m$, where $t \geq n/w$.

\item Each block $B_{ij}$ is replaced by the codewords of each codeset of $\mD$ by using any bijection between
the set of points of $B_{ij}$ and the point set of $\mD$.
Therefore, each codeword in $\mD$ corresponds to a $w$-subset of $B_{ij}$
and from each block $B_{ij}$ we construct codewords for $m$ new codesets. These sets of codewords from the $m$ codesets
will be denoted by $\cE_{ij\ell}$ for each $\ell\in [m]$.

\item For $(i,\ell)\in [q^{w-1}]\times[m]$, the codeset $\cE_{i\ell}$
is defined by $\cE_{i\ell} \triangleq \bigcup_{j=1}^q \cE_{ij\ell}$.
\end{itemize}
\end{construction}

Along the same lines of the proof in Theorem~\ref{thm:ecc} one can prove that
\begin{theorem}
The code $\{ \cP_i ~:~ 1 \leq i \leq q^{w-1} \}$ is an $(nq,q-1,n)$-CPC code.
\end{theorem}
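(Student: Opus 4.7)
The plan is to mirror the four-step argument used in the proof of Theorem~\ref{thm:ecc} and then invoke Proposition~\ref{resolvable}. Concretely, I would establish that the point set $\F_q\times[n]$ has size $qn$, that each block $C_f$ has size $n$, that the $q^{w-1}$ codesets $\cE_{\bsg}$ are pairwise disjoint, and that each $\cE_{\bsg}$ is a parallel class of size exactly $q$. Relabeling the $\cE_{\bsg}$ as $\cP_1,\ldots,\cP_{q^{w-1}}$, the cooling parameter $t=q-1$ will then follow immediately from the pigeonhole step used in Proposition~\ref{resolvable}: any $(q-1)$-subset $S$ of $\F_q\times[n]$ can meet at most $q-1$ of the $q$ disjoint blocks of $\cP_i$, so some block avoids $S$.

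For disjointness across codesets, I would argue that if $C_f=C_g$ with $f\in\cE_{\bsg}$ and $g\in\cE_{\bsg'}$, then $f$ and $g$ agree on the $n$ evaluation points $a_1,\ldots,a_n$; since both polynomials have degree at most $w-1$ and $n\geq w$, it follows that $f=g$ and hence $\bsg=\bsg'$. Within a fixed $\cE_{\bsg}$, if two distinct blocks $C_f,C_g$ share a point $(x,i_0)$, then $f-g$ vanishes at the $w$ distinct elements $a_{i_0},b_1,\ldots,b_{w-1}$ while having degree at most $w-1$, a contradiction. Hence the blocks inside each $\cE_{\bsg}$ are pairwise disjoint $n$-subsets of the $qn$-point set, so $|\cE_{\bsg}|\leq q$.

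For the counting, there are exactly $q^w$ polynomials of degree at most $w-1$ in $\F_q[X]$, each yielding a distinct block by the preceding step. These partition, according to the tuple $(f(b_1),\ldots,f(b_{w-1}))$, into $q^{w-1}$ codesets of average size $q$; combined with the per-codeset upper bound $|\cE_{\bsg}|\leq q$, every codeset has size exactly $q$ and therefore fills $\F_q\times[n]$, i.e.\ it is a parallel class. Proposition~\ref{resolvable} then delivers the claimed $(nq,q-1,n)$-CPC property.

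I do not anticipate a serious obstacle: the argument is essentially the Reed--Solomon template of Theorem~\ref{thm:ecc} with $e=0$, with the degree bound (here $w-1$) and the number of evaluation points (here $n$) playing independent roles rather than being coupled as they were there. The only bookkeeping point to watch is the implicit assumption $n\geq w$ needed for distinct polynomials to yield distinct blocks; this is consistent with the construction, where $w-1$ is used as a degree bound and $a_1,\ldots,a_n,b_1,\ldots,b_{w-1}$ are assumed distinct inside $\F_q$.
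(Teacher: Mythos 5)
Your proposal is correct and follows essentially the route the paper intends: the paper itself only remarks that the result is proved ``along the same lines of the proof of Theorem~\ref{thm:ecc}'', and your four steps (distinct polynomials give distinct blocks, disjointness across and within the codesets $\cE_{\bsg}$ via the degree bound $w-1$ versus $w$ agreements, the counting $q^{w}/q^{w-1}=q$ forcing each class to be a parallel class, and then Proposition~\ref{resolvable}) are exactly that argument specialized to $e=0$, with the error-correction step rightly dropped. Your bookkeeping remark that $n\ge w$ is needed (and is guaranteed, since the $(n,t,w)$-CPC ingredient forces $t+w\le n$) is a sound observation and does not change the conclusion.
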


\begin{theorem}
The code $\mE=\{\cE_{i\ell}: i \in [q^{w-1}],~ \ell\in [m]\}$ is an $(nq,tq,w)$-CPC code of size $mq^{w-1}$.
\end{theorem}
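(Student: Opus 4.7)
The plan is to verify three items about $\mE$: (i) every codeword has Hamming weight exactly $w$; (ii) the $mq^{w-1}$ codesets are pairwise disjoint; and (iii) for every $(tq)$-subset $S\subseteq \F_q\times[n]$ and every codeset $\cE_{i\ell}$, some codeword of $\cE_{i\ell}$ avoids $S$. The size count $|\mE|=mq^{w-1}$ is then just the count of pairs $(i,\ell)\in[q^{w-1}]\times[m]$.

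Claim (i) is immediate from the construction: each codeword in $\cE_{ij\ell}$ is the image, under the chosen bijection between $B_{ij}$ and the point set of $\mD$, of a weight-$w$ codeword of the $\ell$-th codeset of $\mD$, and hence is a $w$-subset of $B_{ij}\subseteq \F_q\times[n]$.

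For claim (iii), the key input is that $\cP_i=\{B_{i1},\ldots,B_{iq}\}$ is a parallel class (this is the $(nq,q-1,n)$-CPC statement recorded in the preceding theorem), so the $B_{ij}$'s partition $\F_q\times[n]$ into $q$ blocks of size $n$. Since $|S|=tq$, the pigeonhole principle yields some $j^{*}\in[q]$ with $|S\cap B_{ij^{*}}|\leq t$. Transport $S\cap B_{ij^{*}}$ through the bijection to the point set of $\mD$ and, if necessary, enlarge it to a $t$-subset $T$ (possible because $t\leq n-w\leq n$). The CPC property of $\mD$ then produces a codeword in its $\ell$-th codeset avoiding $T$; its image in $B_{ij^{*}}$ lies in $\cE_{ij^{*}\ell}\subseteq \cE_{i\ell}$ and avoids $S\cap B_{ij^{*}}$, and because that image is contained in $B_{ij^{*}}$ it is automatically disjoint from $S\setminus B_{ij^{*}}$, hence from all of $S$.

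The main obstacle is claim (ii), the disjointness of the $\cE_{i\ell}$'s as collections of $w$-subsets of $\F_q\times[n]$. The crucial observation is that two distinct polynomials $f,g\in \F_q[X]$ of degree at most $w-1$ can agree on at most $w-1$ of the distinct evaluation points $a_1,\ldots,a_n$, so the corresponding blocks $C_f,C_g$ satisfy $|C_f\cap C_g|\leq w-1$. Consequently no $w$-subset of $\F_q\times[n]$ is contained in two different blocks $B_{ij}$. Thus if a codeword lay in $\cE_{i\ell}\cap\cE_{i'\ell'}$, it would have to arise from a common block $B_{ij}=B_{i'j'}$, forcing $(i,j)=(i',j')$; the disjointness of the $m$ codesets of $\mD$ then forces $\ell=\ell'$. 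This finishes the verification that $\mE$ is an $(nq,tq,w)$-CPC code of size $mq^{w-1}$.
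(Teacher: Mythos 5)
Your proposal is correct and follows essentially the same route as the paper's proof: pigeonhole over the $q$ blocks of the parallel class $\cP_i$ to isolate a part of $S$ of size at most $t$, the bound $|B_{ij}\cap B_{i'j'}|\leq w-1$ (from distinct polynomials of degree at most $w-1$) to rule out a common $w$-subset in different blocks, and the disjointness of the codesets of $\mD$ within a single block. Your added details (enlarging $S'$ to a full $t$-subset and noting that the codeword's containment in $B_{ij^{*}}$ handles $S\setminus B_{ij^{*}}$) only make explicit what the paper leaves implicit.
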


\begin{proof}
The size of $\mE$, the length of its codewords and their weight follow immediately from the definition of the codewords in $\mE$.

Given a $(tq)$-subset $S \subset \F_q\times [n]$ and a codeset $\cE_{i\ell}$, $(i,\ell)\in [q^{w-l}]\times[m]$,
we should find a codeword $\vu \in \cE_{i\ell}$ such that $\supp( \vu ) \cap S = \varnothing$.
Since $\cE_{i \ell}$ was constructed from the $q$ blocks of $\cP_i$ in which the codewords
of the $\ell$-th codeset of $\mD$ were substituted, we have to find first
a block $B_{ij} \in \cP_i$ which contains a subset $S'$ of $S$ whose size is at most $t$. Such a block exists
since the number of blocks in $\cP_i$ is $q$ and $S$ has size $tq$.
Since $\cE_{ij\ell}$ is a codeset in an $(n,t,w)$-CPC code, we can find a block $\vu$ in $\cE_{ij\ell}$
which avoids $S'$. As a consequence  $\supp( \vu ) \cap S = \varnothing$ as required.

To complete the proof we have to show that all the codesets of $\mE$ are pairwise disjoint, i.e.
$\cE_{i\ell}$ and $\cE_{i'\ell'}$ are disjoint whenever $(i,\ell)\ne(i',\ell')$. To this end,
it suffices to show  $\cE_{ij\ell}$ and $\cE_{i'j'\ell'}$ are disjoint for any $j,j' \in [q]$. If $(i,j)\not=(i',j')$, it can
be verified that $|B_{ij} \cap B_{i'j'}|\leq w-1$ since intersection of size $w$ will imply that the
related polynomials are equal. Hence, since each $\cE_{ij\ell}$ is a collection of $w$-subsets
of $B_{ij}$, we have that $\cE_{ij\ell}$ and $\cE_{i'j'\ell'}$ are disjoint.
If $(i,j)=(i',j')$ then $\cE_{ij\ell}$ and $\cE_{ij\ell'}$ are from the same $(n,t,w)$-CPC code and therefore they are disjoint.
\end{proof}

Construction~\ref{recursivecon} can be applied also on $(n,t,w)$-LPC code (instead of $(n,t,w)$-CPC code).
The only condition is that there is no codeset in which there are codewords of different weight.
Also, when there are codewords of weight $w' < w$ in the codeset, the whole construction should work with $w'$
instead of $w$, e.g. the degree of the polynomial must be at most $w'-1$.

\begin{corollary}
\label{cororec}
Let $q$ be a prime power. If  $t+w\leq n$ and  $q \geq n+w-1$, then
\begin{enumerate}[(i)]
\item there exists an $(nq,tq,w)$-CPC code of size $q^{w-1}$;
\item there exists an $(nq,tq,w)$-LPC code of size $\sum_{i=0}^{w-1} q^i$.
\end{enumerate}
\end{corollary}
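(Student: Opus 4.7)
My plan is to derive both parts as direct applications of Construction~\ref{recursivecon} together with its accompanying theorem. The unifying observation is that whenever $t+w\leq n$ there is a trivial $(n,t,w)$-CPC code of size $1$: take the single codeset $\cC_{1}=\binom{[n]}{w}$ consisting of all $w$-subsets of $[n]$; for any $t$-subset $S\subseteq[n]$, the condition $t+w\leq n$ ensures that $[n]\setminus S$ contains a $w$-subset, which supplies the required avoiding codeword. Analogously, the same recipe furnishes a trivial $(n,t,w')$-CPC code of size $1$ for every $w'\leq w$, since $t+w'\leq t+w\leq n$.

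For part~(i), I would plug this trivial $(n,t,w)$-CPC code, with $m=1$, directly into Construction~\ref{recursivecon}. The hypothesis $q\geq n+w-1$ is exactly the condition required by the construction, so the theorem preceding the corollary immediately outputs an $(nq,tq,w)$-CPC code of size $mq^{w-1}=q^{w-1}$, as claimed.

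For part~(ii), I would invoke the remark following Construction~\ref{recursivecon}, which allows the entire recipe to be rerun with any target weight $w'\leq w$: the outer polynomial family is restricted to degree at most $w'-1$, the inner code is an $(n,t,w')$-CPC code, and the output is a code whose codewords all have weight exactly $w'$. Running this for each $w'\in\{1,2,\ldots,w\}$ with the trivial size-$1$ inner $(n,t,w')$-CPC code produces a layer of $q^{w'-1}$ codesets of weight-$w'$ words. Codesets coming from different layers are automatically disjoint, because their codewords have distinct Hamming weights, and each layer is itself a valid $(nq,tq,w')$-CPC code, so the cooling property is preserved in the union. Summing layer sizes gives $\sum_{w'=1}^{w}q^{w'-1}=\sum_{i=0}^{w-1}q^{i}$, which is precisely the claimed size of the $(nq,tq,w)$-LPC code.

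I do not anticipate any serious obstacle. The only point that warrants a careful check is that the proof of the theorem preceding the corollary really does transfer verbatim when $w$ is replaced by $w'<w$ throughout; this amounts to a routine substitution in the polynomial identity-theorem argument for pairwise disjointness of parallel classes, the counting of parallel-class sizes, and the pigeonhole-based cooling argument. Since none of those steps uses anything specific to $w'=w$ beyond the inequality $q\geq n+w'-1$, which is already implied by $q\geq n+w-1$, the transfer is immediate.
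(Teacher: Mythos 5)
Your proposal is correct and follows essentially the same route as the paper: part~(i) is obtained by feeding the trivial one-codeset $(n,t,w)$-CPC code (all $w$-subsets of $[n]$) into Construction~\ref{recursivecon}, and part~(ii) by repeating this for every $w'\leq w$ and combining the resulting codes, which are disjoint since their codewords have distinct weights. Your write-up merely spells out the weight-based disjointness and the parameter checks ($t+w'\leq n$, $q\geq n+w'-1$) that the paper leaves implicit.
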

\begin{proof}
\begin{enumerate}[(i)]
\item the first claim follows from the fact that we can use an $(n,t,w)$-CPC code with exactly one codeset
which contains all the $w$-subsets of the related $n$-set.

\item the second claim follows from the fact we can apply Construction~\ref{recursivecon} and claim (i) on any $w' \leq w$
and obtain disjoint codes that can be combined together.
\end{enumerate}
\end{proof}

\begin{example}
We compare certain CPC codes obtained from Construction~\ref{recursivecon} and
Corollary ~\ref{cororec} with the LPC codes obtained from Proposition~\ref{prop:sunflower}.
\begin{enumerate}[(i)]
\item Consider the set of five disjoint $3$-$(10,4,1)$ designs constructed by Etzion and Hartman~\cite{Etzion.1991}.
By taking the complements of the blocks we obtain a $(10,3,6)$-CPC code of size five.
Applying Construction~\ref{recursivecon} with $q=16$, we obtain a $(160,48,6)$-CPC code of size $5\cdot 16^5\approx 2^{22.322}$.

In contrast, Proposition \ref{prop:sunflower} yields a $(160,48,6)$-LPC code of size $2^{17}$ by setting $s=137$ and $r=95$.

\item Setting $n=9$, $t=2$, $w=7$, and $q=16$ in Corollary~\ref{cororec}  yields a $(144,32,7)$-LPC code of size ${\sum_{i=0}^6 16^i\approx 2^{24.093}}$.

In contrast, Proposition \ref{prop:sunflower} yields a $(144,32,7)$-LPC code of size $2^{18}$ by setting $s=121$ and $r=94$.
\end{enumerate}
\end{example}

{
In the regime where $w$ is fixed and $t$ has order of magnitude as $n$, we show that the
codes obtained in this section are asymptotically larger than those obtained from Proposition \ref{prop:sunflower}.
The CPC codes obtained from Construction~\ref{recursivecon} and Corollary~\ref{cororec} attain
the asymptotic upper bound $O((nq)^{w-1})$ when $w$ is fixed. In contrast,
if we apply Proposition~\ref{prop:sunflower} with $s= nq-\lceil\log_2 (\sum_{i=0}^{w-1} { nq-1 \choose i }\rceil  )$
(the Gilbert-Varshamov lower bound) and  $r= nq-tq-\lfloor \log_2 (\sum_{i=0}^{\frac{w}{2}} { nq -tq\choose i })\rfloor$
(the Hamming upper bound), we obtain an $(nq,tq,w)$-LPC code of smaller size
$O((nq)^{{w}/{2}})$, or $o((nq)^{w-1})$.
}

\section{LPC Codes from Cooling Codes}
\label{sec:map}

In this section we use a novel method to transform cooling codes
into low-power cooling codes, while preserving the efficiency of the
cooling codes. The construction is based on an injective mapping called domination mapping
which was defined as follows in~\cite{CEKV18}

The \emph{Hamming ball of radius $w$ in $\{0,1\}^n$} is the set $\cB(n,w)$
of all words of weight at most $w$. Explicitly,
\smash{$\cB(n,w) \triangleq \bigl\{ \yyy \in \{0,1\}^n \,:\, \wt(\yyy) \le w \bigr\}$}.
Given $m \le n$,~we are interested in injective mappings $\varphi$
from $\{0,1\}^m$ into $\cB(n,w)$ that establish a certain domination
relationship between positions in $\xxx \,{\in}\, \{0,1\}^m$ and positions
in its image $\yyy = \varphi(\xxx)$.
Specifically, one should be able to ``switch off'' every position
$j \,{\in}\, [n]$ in $\yyy$ (that is, ensure that $y_j = 0$) by switching
off a corresponding position $i \,{\in}\, [m]$ in $\xxx$ (that is, setting
$x_i = 0$). More precisely, let $G = \bigl([m] \cup [n], E\bigr)$ be a
bipartite graph with $m$ left vertices and $n$~right vertices.
If $G$ has no isolated right vertices, we refer to $G$ as
a \emph{domination graph}.

\begin{definition}
\label{domination-def}
Given an injective map $\varphi: \{0,1\}^m \to \cB(n,w)$ and
a graph $G = \bigl([m] \cup [n], E\bigr)$, we say that~$\varphi$
is a \emph{$G$-domination mapping}, or \emph{$G$-dominating} in brief, if
\begin{align*}
\forall\, (x_1,x_2,\ldots,x_m) \in \{0,1\}^m,
\hspace{0.90ex}
\forall\, (i,j) \in E :
 \hspace{36.00ex}\\
\text{if\/ $\varphi(x_1,x_2,\ldots,x_m) = (y_1,y_2,\ldots,y_n)$ and $x_i = 0$,\,
then $y_j = 0$}
\end{align*}
We say that $\varphi$ is an \emph{$(m,n,w)$-domination mapping} if
there exists a domination
graph $G = \bigl([m] \cup [n], E\bigr)$, such that $\varphi$ is {$G$-dominating}.
\end{definition}

Properties of domination mappings, bounds on their parameters, constructions, and existence theorems
were given in~\cite{CEKV18}. For our purpose we need some results from~\cite{CEKV18} and some which
will be developed in the sequel. The first one taken from~\cite{CEKV18} restricts the structure
of the domination graph.

\begin{lemma}
\label{lem:sub_G}
The domination graph $G = \bigl([m] \cup [n], E\bigr)$ of an $(m,n,w)$-domination mapping
has  a subgraph with no isolated vertices and the degrees of the right vertices is exactly one.
\end{lemma}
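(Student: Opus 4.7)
The plan is to combine the injectivity of $\varphi$ with Hall's marriage theorem. First I will show that $G=([m]\cup[n],E)$ satisfies Hall's condition from the left side, namely that $|N(I)|\ge|I|$ for every $I\subseteq[m]$, where $N(I)$ denotes the set of right-vertices adjacent to some vertex of $I$. Fix such an $I$ and consider the $2^{|I|}$ inputs $\xxx\in\{0,1\}^m$ satisfying $x_i=0$ for every $i\notin I$. For any $j\notin N(I)$, every neighbor of $j$ in $[m]$ must lie in $[m]\setminus I$, and at least one such neighbor exists because $G$ has no isolated right vertex; hence the domination property forces the $j$-th coordinate of $\varphi(\xxx)$ to equal $0$. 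Consequently each such $\varphi(\xxx)$ is supported inside $N(I)$, and the injectivity of $\varphi$ gives $2^{|I|}\le 2^{|N(I)|}$, which is exactly Hall's condition.

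Next I will apply Hall's theorem to obtain a matching $M\subseteq E$ saturating $[m]$; inside $M$ every left vertex has degree exactly one and every right vertex has degree at most one. Finally, for each right vertex $j$ not saturated by $M$, I pick an arbitrary neighbor $i_j\in[m]$ (such a neighbor exists because $G$ is a domination graph) and add the edge $(i_j,j)$ to $M$. The resulting subgraph $H$ has every right vertex of degree exactly one, while every left vertex is already covered by its matching edge, so $H$ has no isolated vertex.

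The only substantial step is establishing Hall's condition; once that counting inequality is extracted from the domination property, everything else is routine bookkeeping.
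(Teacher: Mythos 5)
Your proof is correct. There is, however, no in-paper argument to compare it with: the paper states Lemma~\ref{lem:sub_G} without proof, quoting it from the companion work~\cite{CEKV18} on domination mappings. Your route is a valid, self-contained derivation: the counting step (inputs vanishing outside $I\subseteq[m]$ are mapped, by the domination property together with the absence of isolated right vertices, to words supported inside $N(I)$, so injectivity gives $2^{|I|}\le 2^{|N(I)|}$) establishes Hall's condition on the left side; Hall's theorem then yields a matching saturating $[m]$; and assigning to each unmatched right vertex a single arbitrary neighbour produces a subgraph of $G$ in which every right vertex has degree exactly one (matched right vertices keep only their matching edge, unmatched ones receive exactly one new edge) while every left vertex retains its matching edge, hence no isolated vertices. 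Two small remarks: taking $|I|=1$ in your counting argument shows that $G$ itself has no isolated left vertices, which is the real crux that makes the saturation of $[m]$ possible (the weight bound $w$ plays no role, as expected); and since deleting edges only weakens the universally quantified domination condition, $\varphi$ remains dominating with respect to your subgraph, which still has no isolated right vertices---so the subgraph is again a domination graph, exactly as the paper's subsequent normalization and Lemma~\ref{lem:neighbour} require.
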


In view of Lemma~\ref{lem:sub_G} we will assume in the sequel that our domination graphs
have no isolated vertices and all the right vertices have degree exactly one.
We will define the \emph{neighbourhood} of a vertex $\vv$ is $G$ as the set of vertices adjacent to $\vv$
and denote it by $N(\vv)$. The following lemma is an immediate consequence of these observations
and definition.

\begin{lemma}
\label{lem:neighbour}
If $U \subset [n]$ is a set of right vertices of $G$ then $N(U) \triangleq \{ N(\vu) ~:~ \vu \in U \}$ is
a set of vertices in $[m]$ and $|N(U)| \leq |U|$.
\end{lemma}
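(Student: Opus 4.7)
The plan is to invoke the structural convention on $G$ stated immediately after Lemma~\ref{lem:sub_G} and then do a one-line counting argument. Recall that we have agreed to work only with the subgraph provided by Lemma~\ref{lem:sub_G}, in which every right vertex has degree exactly one. Consequently, for each $u \in [n]$, the neighbourhood $N(u)$ is a singleton contained in $[m]$.

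First I would spell out that, under this convention, $N(U)$ is naturally interpreted as the union $\bigcup_{u \in U} N(u) \subseteq [m]$, so the first assertion (that $N(U)$ is a set of vertices in $[m]$) is immediate. Then I would obtain the cardinality bound by writing
\[
|N(U)| \;=\; \Bigl|\bigcup_{u \in U} N(u)\Bigr| \;\leq\; \sum_{u \in U} |N(u)| \;=\; \sum_{u \in U} 1 \;=\; |U|,
\]
where the inequality is the usual union bound on sizes and the penultimate equality uses that each $N(u)$ is a singleton.

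There is essentially no obstacle: the content of the lemma is just a restatement of the degree-one condition in combination with the pigeonhole/union-bound observation that distinct right vertices in $U$ may share the same left neighbour, which is precisely what prevents the bound from being an equality. I would conclude by noting that equality holds iff the map $u \mapsto N(u)$ is injective on $U$, in case that refinement is needed later in the paper.
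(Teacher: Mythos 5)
Your proof is correct and is exactly the argument the paper treats as immediate: under the convention following Lemma~\ref{lem:sub_G} each right vertex has a single left neighbour in $[m]$, so $N(U)\subseteq[m]$ and the union bound gives $|N(U)|\le|U|$. Nothing further is needed.
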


Next, the obvious connection between domination mappings, cooling codes, and
low-power cooling codes is given in the following theorem.

\begin{theorem}
If there exists an $(m,t)$-cooling code $\mC =\{\cC_1,\cC_2,\ldots,\cC_M\}$ and an $(m,n,w)$-domination mapping
$\varphi$ then the code $\mC' = \{ \cC_1',\cC_2',\ldots,\cC_M' \}$, where
$$\cC_i' \triangleq \{ \varphi (\xxx) ~:~ \xxx \in \cC_i \},~\text{for each}~1 \leq i \leq M,$$
in an $(n,t,w)$-LPC code.
\end{theorem}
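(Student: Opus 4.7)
The plan is to verify, in order, the three requirements in the definition of an $(n,t,w)$-LPC code: that the codewords lie in the appropriate Hamming ball, that the codesets are pairwise disjoint, and that the cooling property holds for every $t$-subset $S\subseteq[n]$. The first requirement is immediate, since by definition $\varphi$ maps into $\cB(n,w)$, so every $\yyy\in\cC_i'$ satisfies $\wt(\yyy)\le w$. The second requirement follows from the injectivity of $\varphi$ together with the fact that the original codesets $\cC_1,\cC_2,\ldots,\cC_M$ are pairwise disjoint.

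The heart of the argument is the third requirement, and the key idea is to translate a forbidden set $S$ in the image space $[n]$ into a forbidden set of no larger size in the source space $[m]$, then invoke the cooling property of $\mC$. First I would fix an arbitrary $i\in[M]$ and an arbitrary $t$-subset $S\subseteq[n]$. Using the domination graph $G=([m]\cup[n],E)$ associated with $\varphi$, I would set $S'\triangleq N(S)\subseteq[m]$. By Lemma~\ref{lem:neighbour} we have $|S'|\le|S|=t$, so $S'$ can be extended (arbitrarily) to a $t$-subset $S''\subseteq[m]$ with $S'\subseteq S''$.

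Next I would apply the cooling property of $\mC$ to the $t$-set $S''$: there exists $\xxx\in\cC_i$ with $\supp(\xxx)\cap S''=\varnothing$, and hence in particular $x_i=0$ for every $i\in S'$. Setting $\yyy=\varphi(\xxx)$, the domination property kicks in: since every right vertex $j\in[n]$ has degree exactly one in $G$ (by our standing assumption after Lemma~\ref{lem:sub_G}), for each $j\in S$ the unique neighbour $N(j)$ lies in $S'$, so $x_{N(j)}=0$ forces $y_j=0$. Therefore $\supp(\yyy)\cap S=\varnothing$, and $\yyy\in\cC_i'$ is the required codeword.

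There is no serious obstacle; the only subtlety is being careful that the cooling definition demands a set of size \emph{exactly} $t$, which is why the step of padding $S'$ up to a $t$-subset $S''$ is needed. Once this is handled the domination hypothesis delivers the conclusion essentially by definition, and the three requirements together show that $\mC'$ is an $(n,t,w)$-LPC code.
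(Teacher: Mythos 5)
Your proposal is correct and follows essentially the same route as the paper: pull the hot set $S\subseteq[n]$ back through the domination graph via Lemma~\ref{lem:neighbour}, invoke the cooling property of $\mC$, and push the resulting codeword forward with the domination property. You are in fact slightly more careful than the paper on two minor points — explicitly checking disjointness of the codesets via injectivity of $\varphi$, and padding $N(S)$ up to an exact $t$-subset before applying the cooling definition — both of which the paper's proof glosses over.
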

\begin{proof}
The length $n$ and the weight which is smaller from or equal to $w$ for the codewords of $\mC$ are immediate
consequences from the definition of the $(m,n,w)$-domination mapping. Now, suppose we are given a $t$-subset $S' \subset [n]$ and a
codeset $\cC_i'$ for some $1 \leq i \leq M$. To complete the proof we have to show
that there exists a codeword $\vu' \in \cC_i'$ such that $\supp(\vu') \cap S'=\varnothing$.
The $t$-subset $S'$ can be viewed as a set of right vertices in the domination graph $G = \bigl([m] \cup [n], E\bigr)$.
By Lemma~\ref{lem:neighbour}, for the set of neighbours of $S' \subset [n]$, $S \triangleq N(S') \subset [m]$, we have that $|S| \leq |S'|$
and hence $|S| \leq t$. Since $\mC$ is
an $(m,t)$-cooling code, it follows that there exists a codeword $\vu$ in $\cC_i$
such that $\supp(\vu) \cap S=\varnothing$ which implies by the domination
property that $\supp(\varphi(\vu)) \cap S'=\varnothing$.
\end{proof}

A product construction for domination mappings was presented in~\cite{CEKV18}.

Let
$\varphi_1\!: \{0,1\}^{m_1} \to \cB(n_1,w_1)$
and
$\varphi_2\!: \{0,1\}^{m_2} \to \cB(n_2,w_2)$
be arbitrary domination mappings. Then their
\emph{product}
$\varphi = \varphi_1 \times \varphi_2$\, is a mapping from
$\{0,1\}^{m_1+m_2}$ into $\cB(n_1\kern-1pt+n_2,w_1\kern-1pt+w_2)$ defined as
follows:
$$
\varphi(\xxx_1,\xxx_2)
\ = \
\bigl(\varphi_1(\xxx_1),\varphi_2(\xxx_2)\bigr)
$$
where $\xxx_1\,{\in}\,\{0,1\}^{m_1}$, $\xxx_2\,{\in}\,\{0,1\}^{m_2}$,
and $(\cdot,\cdot)$ stands for string concatenation. 
That is, in order to find the~image of a word $\xxx \in \{0,1\}^{m_1+m_2}$
under $\varphi$, we first parse $\xxx$ as $(\xxx_1,\xxx_2)$, then
apply $\varphi_1$ and $\varphi_2$ to the two parts. 
\begin{theorem}
\label{thm:product}
If $\varphi_1$ is an $(m_1,n_1,w_1)$-domination mapping
and $\varphi_2$ is an $(m_2,n_2,w_2)$-domination mapping,
then their product $\varphi = \varphi_1 \times \varphi_2$\kern1pt\ is
an $(m_1\kern-1pt+ m_2, n_1\kern-1pt+n_2,w_1\kern-1pt+w_2)$-domination mapping.
\vspace{-0.54ex}
\end{theorem}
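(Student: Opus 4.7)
The plan is to verify the three requirements in Definition~\ref{domination-def}: that $\varphi$ maps into $\cB(n_1+n_2,w_1+w_2)$, that $\varphi$ is injective, and that there is a suitable domination graph on $[m_1+m_2]\cup[n_1+n_2]$ witnessing the domination property. All three will follow mechanically from the corresponding properties of $\varphi_1$ and $\varphi_2$.

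First I would observe that for any $\xxx=(\xxx_1,\xxx_2)\in\{0,1\}^{m_1+m_2}$, the weight of $\varphi(\xxx)$ is $\wt(\varphi_1(\xxx_1))+\wt(\varphi_2(\xxx_2))\le w_1+w_2$, so the image lies in $\cB(n_1+n_2,w_1+w_2)$. Injectivity is equally direct: if $\varphi(\xxx_1,\xxx_2)=\varphi(\xxx_1',\xxx_2')$, then comparing the first $n_1$ coordinates gives $\varphi_1(\xxx_1)=\varphi_1(\xxx_1')$, hence $\xxx_1=\xxx_1'$ by injectivity of $\varphi_1$, and analogously $\xxx_2=\xxx_2'$.

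Next I would construct the domination graph. Let $G_1=\bigl([m_1]\cup[n_1],E_1\bigr)$ and $G_2=\bigl([m_2]\cup[n_2],E_2\bigr)$ be domination graphs witnessing the $G_i$-dominating property of $\varphi_i$. Define $G=\bigl([m_1+m_2]\cup[n_1+n_2],E\bigr)$ by placing the vertices of $G_2$ after those of $G_1$ on each side and taking
$$E \,=\, E_1 \,\cup\, \bigl\{(i+m_1,\, j+n_1)\,:\,(i,j)\in E_2\bigr\}.$$
Since neither $G_1$ nor $G_2$ has isolated right vertices, $G$ has no isolated right vertices, so $G$ is a valid domination graph.

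Finally I would check the domination condition. Take any edge $(i,j)\in E$ and any $\xxx=(\xxx_1,\xxx_2)$ with $x_i=0$, and write $\varphi(\xxx)=(y_1,\ldots,y_{n_1+n_2})$. If $(i,j)\in E_1$, then $i\le m_1$ and $j\le n_1$, so $x_i$ is a coordinate of $\xxx_1$ and $y_j$ is a coordinate of $\varphi_1(\xxx_1)$; the $G_1$-domination of $\varphi_1$ gives $y_j=0$. If instead $(i,j)$ comes from the shifted copy of $E_2$, then $x_i$ is a coordinate of $\xxx_2$ and $y_j$ is a coordinate of $\varphi_2(\xxx_2)$, and the $G_2$-domination of $\varphi_2$ gives $y_j=0$. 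This completes the proof that $\varphi$ is $G$-dominating. The only mildly delicate step is keeping the index shift bookkeeping consistent between the two sides of the bipartition, but there is no real obstacle.
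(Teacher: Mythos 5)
Your proof is correct: the weight bound, injectivity, and the disjoint-union (index-shifted) domination graph together verify Definition~\ref{domination-def} exactly as needed. The paper itself states Theorem~\ref{thm:product} without proof, citing~\cite{CEKV18}, and your direct verification is the standard argument that construction requires, so there is nothing to add.
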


The idea in Theorem~\ref{thm:product} can be generalized as follows to a large number of domination mappings.
\begin{theorem}
\label{thm:gen_product}
Let $\varphi_i$ be an $(m_i,n_i,w_i)$-domination mapping for each $1 \leq i \leq \ell$,
and let $(\xxx_1,\xxx_2,\ldots,\xxx_\ell)$ be a binary word, where the length of $\xxx_i$ is~$m_i$,
for each $1 \leq i \leq \ell$.
The mapping $\varphi$, defined by
$$
\varphi (\xxx_1,\xxx_2,\ldots,\xxx_\ell) = (\varphi_1 (\xxx_1),\varphi_2(\xxx_2),\ldots,\varphi_\ell(\xxx_\ell)),
$$
is also an $(m,n,w)$-domination mapping for $m=\sum_{i=1}^\ell m_i$, $n=\sum_{i=1}^\ell n_i$,
and $w=\sum_{i=1}^\ell w_i$.
\end{theorem}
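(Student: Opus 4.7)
The plan is to proceed by induction on $\ell$, with the base case $\ell=2$ being exactly Theorem~\ref{thm:product}. For the inductive step, suppose the result holds for $\ell-1$ mappings. Define $\psi \triangleq \varphi_1 \times \varphi_2 \times \cdots \times \varphi_{\ell-1}$, which by the induction hypothesis is an $(m',n',w')$-domination mapping with $m'=\sum_{i=1}^{\ell-1} m_i$, $n'=\sum_{i=1}^{\ell-1} n_i$, and $w'=\sum_{i=1}^{\ell-1} w_i$. The mapping $\varphi$ in the statement is then, up to the obvious reassociation of the concatenation, equal to the binary product $\psi \times \varphi_\ell$, so applying Theorem~\ref{thm:product} once more yields the desired $(m,n,w)$-domination mapping with the claimed parameters.

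For readers who prefer a direct argument, I would alternatively exhibit the domination graph explicitly. Let $G_i = \bigl([m_i] \cup [n_i], E_i\bigr)$ be the domination graph witnessing that $\varphi_i$ is $G_i$-dominating; by Lemma~\ref{lem:sub_G} we may assume $G_i$ has no isolated right vertices. Viewing the $i$-th block of left coordinates as sitting inside $[m]$ via the offset $\sum_{j<i} m_j$, and similarly for the right side inside $[n]$, the disjoint union $G \triangleq \bigsqcup_{i=1}^\ell G_i$ is a bipartite graph on $[m]\cup[n]$ with no isolated right vertices, hence a legitimate domination graph.

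To verify that $\varphi$ is $G$-dominating I would check three things. First, injectivity: if $\varphi(\xxx_1,\ldots,\xxx_\ell) = \varphi(\xxx_1',\ldots,\xxx_\ell')$, then $\varphi_i(\xxx_i) = \varphi_i(\xxx_i')$ in each block, so injectivity of each $\varphi_i$ forces $\xxx_i = \xxx_i'$. Second, the weight bound: by construction, $\wt(\varphi(\xxx_1,\ldots,\xxx_\ell)) = \sum_{i=1}^\ell \wt(\varphi_i(\xxx_i)) \leq \sum_{i=1}^\ell w_i = w$, so the image lies in $\cB(n,w)$. Third, the domination property: every edge of $G$ lies in exactly one $G_{i_0}$, so given an edge $(i,j)\in E$ with $x_i=0$, the corresponding coordinate in $\xxx_{i_0}$ vanishes, and $G_{i_0}$-domination of $\varphi_{i_0}$ forces the corresponding coordinate in $\varphi_{i_0}(\xxx_{i_0})$ (which is the $j$-th coordinate of $\varphi(\xxx)$) to be zero as well.

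The statement is essentially a routine iteration of Theorem~\ref{thm:product}, so there is no genuine obstacle; the only care needed is in the bookkeeping of the coordinate offsets used to realize the disjoint union of the graphs $G_i$ inside $[m]\cup[n]$, which is harmless.
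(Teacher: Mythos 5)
Your proof is correct and matches what the paper intends: the paper states this theorem without an explicit proof, presenting it as a routine generalization of Theorem~\ref{thm:product}, and your induction on $\ell$ (equivalently, the disjoint-union domination graph with the injectivity, weight, and domination checks) is exactly that iteration. One cosmetic remark: invoking Lemma~\ref{lem:sub_G} is unnecessary, since a domination graph has no isolated right vertices already by Definition~\ref{domination-def}.
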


Domination mappings are not difficult to find (at least for small parameters). For example, in~\cite{CEKV18}\linebreak
$(2,3,1)$-domination mapping, $(9,15,3)$-domination mapping, and $(12,20,4)$-domination mapping were presented.
These three domination mappings have also efficient encoding and decoding procedures.

Certainly, one can use an $(m,n,w)$-domination mapping to form an $(n,t,w)$-LPC code
from an $(m,t)$ cooling code. The only question is whether there is an efficient encoding
and decoding schemes for the constructed LPC code. Such encoding and decoding schemes should be based on
efficient encoding and decoding schemes for both the related cooling code and the related
domination mapping. For large parameters such coding procedures might not exist.
Hence, it is better to use the product constructions using many domination mappings with small parameters,
but with efficient encoding and decoding schemes.
Our next construction for $(n,t,w)$-LPC codes is based on this idea. For demonstration we will use
a specific family of $(n,t,w)$-LPC code, but the same idea will work on any set of parameters that can be obtained
from domination mappings with small parameters by using Theorem~\ref{thm:gen_product}.
We will describe the construction via its encoding scheme.

\begin{construction}
\label{con:map}
Assume we are given $w \geq 6$, $m=3w=9\alpha+12\beta$,
$n=5w=15\alpha+20\beta$, $t$, and an $(m,t)$-cooling code $\mC$ with $2^k$ codesets.
We will construct an $(n,t,w)$-LPC code $\mC'$.
Let $\vu$ be the information word of length~$k$ and let $\mC_{\vu}$ be its
related codeset in $\mC$.
The encoder partitions the set of $m$ coordinates into $\alpha + \beta$ subsets,\linebreak
$\alpha$ subsets of size~9 and $\beta$ subsets of size~12. Similarly, it
partitions the set $n$ coordinates of the codewords from $\mC'$ into $\alpha + \beta$ subsets, $\alpha$
subsets of size~15 and $\beta$ subsets of size~20. Let $\varphi_1$ and $\varphi_2$
be a $(9,15,3)$-domination mapping and a $(12,20,4)$-domination mapping, respectively.
Let $\varphi$ be the $(m,n,w)$-domination mapping
implied by the product construction of Theorem~\ref{thm:gen_product} on $\alpha$ copies of $\varphi_1$ and $\beta$ copies
of $\varphi_2$.
Let $T$ be a $t$-subset of $[n]$ and
let $T' = N(T)$ be a $t'$-subset of $[m]$, where $t' \leq t$ by Lemma~\ref{lem:neighbour}.
The encoder finds the vector $\vv$ in~$\mC_{\vu}$ related to the set $T'$, i.e. $\vv$ has \emph{zeroes}
in the coordinates of $T'$, as required. Finally the encoder parse~$\vv$
into $\vv_1 \vv_2 \ldots \vv_\alpha \vv'_1 \vv'_2 \ldots \vv'_\beta$, where $\vv_i$ is of length 9 and $\vv'_i$ is
of length 12. By using the encodings of the mappings $\varphi_1$ and $\varphi_2$,
the encoder maps $\vv_1 \vv_2 \ldots \vv_\alpha \vv'_1 \vv'_2 \ldots \vv'_\beta$ to
the word $\varphi(\vv_1 \vv_2 \ldots \vv_\alpha \vv'_1 \vv'_2 \ldots \vv'_\beta)$
of the $(n,t,w)$-LPC code,
where each $\vv_i$ is mapped by $\varphi_1$ to a word of length 15 and each $\vv_i'$ is mapped by $\varphi_2$
to a word of length 20.

The decoder is applied in reverse order to generate the information word of length $k$
from a word of length~$n$ of the $(n=5w,t,w)$-LPC code $\mC'$, by first generating a word, of length $m$, from $\mC$
and after that using the decoder of $\mC$ to find the information word of length $m$.
\end{construction}

Note that Construction~\ref{con:map} can be viewed as a modification of the Concatenation
construction (See Proposition~\ref{prop:concat}).
Construction~\ref{con:map} has an advantage on the Concatenation of Proposition~\ref{prop:concat}
and other constructions with larger size for the same weight $w$ and the same number of hottest wires $t$.

How good are the codes constructed by using the domination mappings.
They are incomparable with the other codes which were constructed in previous sections
due to their parameters.
But, they can be easily compared with the codes obtained in Proposition~\ref{prop:concat}.
We will consider some examples by using three of the most simple (and less powerful) domination
mappings, a $(2,3,1)$-domination mapping, a $(3,9,15)$-domination mapping and a $(4,12,20)$-domination
mapping (note that the last two were used in Construction~\ref{con:map}).

To this end we will describe the simple and effective construction of cooling codes given in~\cite{Cheeetal.2017}.
This construction is based on spreads (or partial spreads) which will be defined next.

Loosely speaking, a partial $\tau$-spread of the vector
space $\F_q^n$ is a collection of disjoint
$\tau$-dimensional subspaces of~$\F_q^n$. Formally,
a collection $V_1,V_2,\ldots,V_M$ of $\tau$-dimensional
subspaces of $\F_q^n$ is said to be a \emph{partial $\tau$-spread of~$\F_q^n$} if
$$
V_i \cap V_j = \{\zero\}~
\text{for all\, $i \ne j$} ~,
$$
$$
\F_q^n \supseteq \kern1pt V_1 \cup V_2 \cup \cdots \cup V_M \hspace*{1ex}~.
$$
If the $\tau$-dimensional subspaces form a partition of $\F_q^n$ then
the partial $\tau$-spread is called a $\tau$-spread.
It is well known that such $\tau$-spreads exist if and
only~if~$\tau$~divides $n$, in which case
$M = (q^n{-}1)/(q^\tau{-}1) > q^{n-\tau}$.
For~the case where $\tau$ does not divide $n$,
\emph{partial $\tau$-spreads} with~\mbox{$M \ge q^{n-\tau}$}
have been constructed in~\cite[Theorem\,11]{EtVa11}.

\begin{theorem}
\label{thm:spread_cool}
Let ~ $V_1,\kern-1ptV_2,\ldots,V_M$ ~ be a partial ~ $(t{+}1)$-spread of~~$\F_2^n$,
and ~ define ~ the ~ code\kern1.5pt\ $\mC = \{V^*_1,V^*_2,\ldots,V^*_M\}$,
where
${V^*_i \kern-1.5pt=\kern-1pt V_i \kern1.5pt{\setminus} \{\zero\}}$
for all $i$.
Then $\mC$ is an $(n,t)$-cooling code of size $M \geq 2^{n-t-1}$.
\end{theorem}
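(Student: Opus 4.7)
The plan is to verify the two defining properties of an $(n,t)$-cooling code for the family $\mC = \{V_1^*, V_2^*, \ldots, V_M^*\}$, namely disjointness of codesets and the cooling property, and then to invoke known existence results for partial spreads to obtain the size bound.

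First I would dispose of disjointness: because $V_1, V_2, \ldots, V_M$ form a partial $(t+1)$-spread, we have $V_i \cap V_j = \{\zero\}$ whenever $i \ne j$, so the sets $V_i^* = V_i \setminus \{\zero\}$ are pairwise disjoint subsets of $\F_2^n \setminus \{\zero\} \subseteq \{0,1\}^n$, as required by the definition of a cooling code.

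Next, the cooling property itself reduces to a single dimension count. Fix any $t$-subset $S \subseteq [n]$ and any codeset $V_i^*$, and let $\pi_S: \F_2^n \to \F_2^t$ be the coordinate projection onto $S$. Its restriction to $V_i$ is a linear map whose image lies in $\F_2^t$, so $\dim \pi_S(V_i) \leq t$. Since $\dim V_i = t+1$, rank-nullity forces $\dim \ker(\pi_S|_{V_i}) \geq 1$, so there exists a nonzero $\vu \in V_i$ with $\pi_S(\vu) = \zero$, i.e.\ $\supp(\vu) \cap S = \varnothing$. Because $\vu \ne \zero$, it lies in $V_i^*$, which is exactly the cooling property for $S$ and $\cC_i = V_i^*$.

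Finally, for the size bound, when $(t+1) \mid n$ a genuine $(t+1)$-spread of $\F_2^n$ exists and has cardinality $(2^n-1)/(2^{t+1}-1) > 2^{n-t-1}$; when $(t+1) \nmid n$, the partial-spread construction cited from \cite{EtVa11} supplies a partial $(t+1)$-spread with at least $2^{n-t-1}$ components. In either case we may take $M \geq 2^{n-t-1}$. There is no genuine obstacle in this proof — the cooling property is a one-line dimension argument and the size bound is a direct appeal to existing partial-spread constructions — so the main task is simply to present the dimension count cleanly and to cite the correct existence theorem.
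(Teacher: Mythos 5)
Your proof is correct and follows the standard argument: the paper itself states this theorem without proof (it is imported from the cooling-codes paper \cite{Cheeetal.2017}, with the size facts about spreads and the appeal to \cite{EtVa11} given in the surrounding text), and your rank--nullity/dimension count for the cooling property, disjointness from $V_i\cap V_j=\{\zero\}$, and the case split on whether $(t{+}1)$ divides $n$ for the bound $M\geq 2^{n-t-1}$ are exactly that standard route.
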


Assume first that we want to construct an $(3w,t,w)$-LPC code from a $(2w,t)$-cooling code.
We consider the trivial $(2,3,1)$-domination mapping and use
a $(t+1)$-spread over $\F_2^{2w}$, where $2(t+1) \leq 2w$ to obtain a $(3w,t,w)$-LPC code
$\mC$ of size $\frac{2^{2w}-1}{2^{t+1}-1} > 2^{2w-t-1}$ (from a $(2w,t)$-cooling code) for any $t+1 \leq w$.
Assume now that we want to form a comparable code using Proposition~\ref{prop:concat}.
There are a few options that can be taken as parameters in Proposition~\ref{prop:concat}.
\begin{enumerate}
\item Assume first that we take $w' =1$, $s =3$, and $m=w$, in Proposition~\ref{prop:concat}.
As a consequence we can take $q=4$, and hence the size of the $(3w,t,w)$-LPC code
obtained by Proposition~\ref{prop:concat} will be $2^{2w-2t-1}$, for $t \leq 3$ and $t+1 \leq w/2$,
which is clearly much smaller
than $\mC$. Moreover, $t$ is at most the minimum between 3 and $\frac{w}{2}-1$
compared to $t \leq w-1$ for the code $\mC$ based on the $(2,3,1)$-domination mapping.

\item A different choice in Proposition~\ref{prop:concat} is $w'=3$, $s=9$, and $m=w/3$.
As a consequence we can take $q=128$, and hence the size of the $(3w,t,w)$-LPC code
obtained by Proposition~\ref{prop:concat} will be $2^{7w/3-7t-7}$ for any $t+1 \leq w/6$, but not larger than 9.
Hence, $t$ is much smaller compared to $t \leq w-1$ for the code $\mC$ based on the $(2,3,1)$-domination mapping.
As for the size of the code, for the same $t$ the code $\mC$ is larger if $t \geq \frac{w}{18} -1$
\end{enumerate}

We continue with our constructed $(5w,t,w)$-LPC code $\mC$ from a $(3w,t)$-cooling code
discussed in Construction~\ref{con:map}.
It has $2^{3w-t-1}$ codesets and it can handle any $t \leq 3w/2 -1$ (if the spread construction
of Theorem~\ref{thm:spread_cool} is used).
Assume now that we want to form a comparable code by using Proposition~\ref{prop:concat}.
There are a few options that can be taken as parameters in Proposition~\ref{prop:concat}.

\begin{enumerate}
\item Assume first that we take $w' =3$, $s =15$, and $m=w/3$, in Proposition~\ref{prop:concat}.
As a consequence we can take $q=2^9$, and hence the size of the $(5w,t,w)$-LPC code
obtained by Proposition~\ref{prop:concat} will be $2^{3w-9t-9}$, for any $t+1 \leq w/6$ but not larger than 15,
which is clearly much smaller than $\mC$ for both size and $t$.

\item A different choice in Proposition~\ref{prop:concat} is $w'=4$, $s=20$, and $m=w/4$.
As a consequence we can take $q=2^{12}$, and hence the size of the $(5w,t,w)$-LPC code
obtained by Proposition~\ref{prop:concat} will be $2^{3w-12t-12}$ for any $t+1 \leq w/8$, but not larger than 20,
which is clearly much smaller than $\mC$ for both size and $t$.
\end{enumerate}

It should be noted that $q$ can be sometimes slightly larger than the one given in the example.
This won't make much difference in the comparison, but the computation in a large field size which is not
a power of 2 is more messy.

We conclude that the codes obtained by this new method have in most cases larger size and
better capabilities than the best codes obtained by previous known constructions.

\vspace{-1ex}
\section*{Acknowledgments}
Y. M. Chee was supported in part by the Singapore Ministry of Education under grant MOE2017-T3-1-007.
T. Etzion and A. Vardy were supported in part by the BSF-NSF grant 2016692
Binational Science~Foundation (BSF), Jerusalem, Israel, under Grant 2012016.
The research of T. Etzion and A. Vardy was supported
in part by the National Science Foundation under grant CCF-1719139.
The research of Y. M. Chee, H. M. Kiah and H. Wei was supported in part  by the Singapore Ministry of Education
under grant MOE2015-T2-2-086.

\vspace{-1ex}

\bibliographystyle{plain}

\end{document}